\newtheorem{theo}{Theorem}
\newtheorem{lemma}{Lemma}
\newtheorem{cor}{Corollary}
\newtheorem{ppro}{Proposition}
\theoremstyle{remark}
\def\ScaleIfNeeded{%
\ifdim\Gin@nat@width>\linewidth \linewidth \else \Gin@nat@width
\fi } \makeatother
\newcommand*{\mathcolor}{}
\def\mathcolor#1#{\mathcoloraux{#1}}
\newcommand*{\mathcoloraux}[3]{%
  \protect\leavevmode
  \begingroup
    \color#1{#2}#3%
  \endgroup
}
\begin{document}

\title{Artificial-Noise Aided Secure Transmission in
Large Scale Spectrum Sharing Networks}

\author{Yansha Deng,~\IEEEmembership{Student Member,~IEEE}, Lifeng Wang,~\IEEEmembership{Student Member,~IEEE}, Syed Ali Raza Zaidi,~\IEEEmembership{Member,~IEEE}, Jinhong Yuan,~\IEEEmembership{Fellow,~IEEE}, and Maged
Elkashlan,~\IEEEmembership{Member,~IEEE}
\thanks{Manuscript received May. 27, 2015; revised Nov. 21, 2015; accepted Mar. 10, 2016.  This paper was presented in part at the Proc. IEEE Int. Conf. Commun. (ICC), London, UK, Jun. 2015 \cite{yan2015ICC}. The editor coordinating the review of this
manuscript and approving it for publication was Prof. Ali Tajer.}
\thanks{Y. Deng and  L. Wang were with Queen Mary University of London, London, UK 
(e-mail:\{ y.deng, lifeng.wang\}@qmul.ac.uk).}
\thanks{ M. Elkashlan is with Queen Mary University of London, London, UK 
(e-mail:  maged.elkashlan@qmul.ac.uk).}
\thanks{Syed Ali Raza Zaidi is with the
School of Electronics and Electrical Engineering, University of Leeds,
Leeds, United Kingdom (e-mail: elsarz@leeds.ac.uk).}
\thanks{Jinhong Yuan is with the School of Electrical Engineering and
Telecommunications, The University of New South Wales, Sydney 2052,
Australia (e-mail:  j.yuan@unsw.edu.au).}
}

\maketitle
\setcounter{page}{1} \thispagestyle{plain}

\begin{abstract}
We investigate beamforming and artificial noise generation at the secondary transmitters to establish secure transmission in large scale spectrum sharing networks, where multiple non-colluding eavesdroppers attempt to intercept the secondary transmission. We develop a comprehensive analytical framework to accurately assess the secrecy performance under the primary users' quality of service constraint. Our aim is to characterize the impact of beamforming and artificial noise generation (BF$\&$AN) on this complex large scale network.  We first derive  exact expressions for the average secrecy rate and the secrecy outage probability. We then derive an easy-to-evaluate asymptotic average secrecy rate and asymptotic secrecy outage probability  when the number of  antennas at the secondary transmitter goes to infinity.  Our results show that  the equal  power allocation between the useful signal and artificial noise  is not always  the  best strategy to achieve maximum average secrecy rate in large scale spectrum sharing networks. Another interesting observation is that the advantage of BF$\&$AN over BF on the average secrecy rate is lost when the aggregate interference from the primary and secondary transmitters  is strong, such that it overtakes the effect of the  generated AN.

\end{abstract}

\begin{IEEEkeywords}
Artificial noise, physical layer security, power allocation, spectrum sharing networks, stochastic geometry.
\end{IEEEkeywords}


\section{Introduction}

The sky-rocketing growth of multimedia infotainment applications and broadband-hungry mobile devices exacerbate the stringent demand for ultra high data rate and  services. Such urgent demand is further driven by the exponential growth of smartphones, tablets, machine-to-machine (M2M) communication devices.  To cope with this,   unlicensed users are allowed to transmit on the spectrum
reserved for the wireless broadband devices as long as  the quality of service (QoS) of the primary network is satisfied \cite{goldsmith09,yan2015}. These networks are often referred to as  cognitive radio networks (CRNs). 


  The open and dynamic characteristics of CRNs have lead to several
new classes of security threats and challenges due to opportunistic utilization of licensed channels \cite{frag2013survey}. There exists six major types of attacks at the physical layer of CRNs, which are commonly known  as primary user emulation, objective function attack, learning attack, spectrum sensing data falsification, jamming attack, and eavesdropping \cite{mukherjee2014principle}. Among them, we focus on the eavesdropping attacks targeted at the secondary users (SUs). In this case, the eavesdroppers attempt to intercept and overhear the secondary transmission  without transmitting any signals.

Traditional security, which is achieved through  the higher layer cryptographic authentication and identification, becomes expensive and vulnerable to attacks.  Particularly, in the emerging large scale networks with  high  mobility terminals, the implementation and management of higher-layer key distribution  face increasing challenges    \cite{wang2011anti,zhou2011throughput}. In other words, the establishment of  the secret keys to achieve encrypted transmission in large scale decentralized networks are even more complicated  and expensive than the point-to-point  communications \cite{zhou2011throughput}. To cope with these issues,  physical layer security has been proposed as a complementary security method to protect the confidential information from eavesdropping \cite{wyner1975wire}.
A comprehensive overview of physical layer security in multiuser wireless networks has been presented in \cite{mukherjee2014principle}.

  Recently, physical layer security has been introduced into large scale wireless networks with randomly located eavesdroppers \cite{yan2016sensor,tong2014transmission,zheng2015multi},  single antenna legitimate nodes and eavesdroppers \cite{zhou2011throughput},  multiple jammers \cite{chao2015uncoordinated}, and  cellular users  \cite{he2013physical,Geraci2014}. In these works,  the  stochastic geometry and random  graphs were applied for modeling these networks \cite{pinto2012secure1}. This mathematical tool is attractive since it captures the topological randomness of these networks, and provides a simple and tractable model for characterizing the performance \cite{elsawy2013stochastic}.

 Various advanced  techniques have been developed to  enhance the secrecy performance  \cite{huiming2015}.
Beamforming (BF) is proved to be the optimal transmission scheme to achieve the maximum achievable  secrecy rate in multiple input single output (MISO) systems \cite{shafiee2007achievable}.
 Generating   artificial noise (AN) at the legitimate transmitter  is proposed to be an effective technique to confound the eavesdroppers  \cite{goel2008guaranteeing}.  In the AN-based method, the power allocation between the information-bearing signal and the AN at the transmitter is critically important, which reveals the tradeoff between enhancing the main channel by increasing the power allocated to information-bearing signal and degrading the eavesdropper's channel by allocating more power to the AN. In \cite{xiang2010secure} and \cite{xi2013}, the optimal power allocation strategies  were studied in the conventional wireless network with fixed nodes, and wireless ad hoc networks with mobile nodes, respectively. However, all these works \cite{shafiee2007achievable,swundke2009fixed,goel2008guaranteeing,xiang2010secure,xi2013} have considered the physical layer security   in legacy  networks.

Compared with the physical layer security in  conventional networks, there exist several major differences in the security of CRNs:
1) the QoS requirement of the primary network needs to be satisfied;
2) the SU receiver is subject to the aggregate interference from the PU transmitters;
and
3) the secondary network is more susceptible to security threats.     In light of the aforementioned circumstances, the research on enhancing the security at the physical layer of CRNs has received increasing attentions.
In \cite{stanojev2013improve,ning2013cooperative,yuan2014secrecy},
the secondary user  acts as a jammer to enhance the secrecy transmission of a primary network.
In \cite{yulong2013pysical},  multiuser scheduling was proposed to improve the security level of  secondary transmission
with  primary QoS constraint. In \cite{yu2014secrecy}, it  was demonstrated that the best secrecy performance of secondary network can be achieved when the perfect channel state information (CSI) of all links
are available. In \cite{yiyang2013secure}, it was proved that beamforming is the optimal transmission strategy to secure MISO CRN with the perfect knowledge of all channels. The authors of \cite{yiyang2013secure} then extended their work to the  networks with imperfect knowledge  in \cite{yiyang11secure}. 
In \cite{chao2014secrecy},
the beamforming and artificial noise generation (BF$\&$AN) was adopted at the SU transmitter to enhance the secrecy throughput of a multiple-input, single-output, multieavesdropper (MISOME) primary network. 
Note that \cite{yulong2013pysical,yu2014secrecy,yiyang2013secure,yiyang11secure,chao2014secrecy} only considered  fixed location nodes. In \cite{anand2008cognitive}, the secrecy capacity of cognitive radio networks with uniformly distributed secondary transmitters and primary transmitters was examined.  In \cite{shu2013physical}, the secrecy capacity of the primary network was analyzed in CRNs, where  PUs, SUs, and Eves followed the mutually independent homogeneous Poisson process.

Different from the aforementioned works, we treat the secrecy performance of large scale spectrum sharing networks with  BF$\&$AN at the SU transmitters.
 Compared against the security of non-cognitive radio networks in \cite{xi2013},  the prerequisite of underlay spectrum sharing networks is to guarantee the QoS of the primary network. This can be  fulfilled by constraining the outage probability at the PU receiver below a predetermined threshold (i.e.,  the peak allowable outage probability).
The use of BF$\&$AN at the SU transmitter    brings  array gains at the legitimate receiver and disrupts the reception at the eavesdropper.
 Although  BF$\&$AN  has been well treated in the conventional physical layer security network  in \cite{xiang2010secure}, no work has  considered  BF$\&$AN    in large scale   spectrum sharing  networks. Therefore, the question of \emph{how  BF$\&$AN impacts the security design of such a complex network} remains unknown.   Our contributions are summarized as follows:
 \begin{enumerate}
\item  We derive a new exact closed-form expression for the maximum permissive transmit power at the SU transmitter with BF$\&$AN. We  accurately quantify the permissive transmit power region where the primary network's QoS can be guaranteed, as presented in {\bf{Theorem 1}}.
 We derive the exact  expressions for the average secrecy rate and the secrecy outage probability of the secondary network with BF$\&$AN at the SU transmitters, as presented in {\bf{Theorems 2}} and {\bf{3}}.
\item We show that there exists an average secrecy rate boundary beyond which the PU receiver's QoS is violated.
 We reveal that the optimal power allocation factor   for maximizing the average secrecy rate varies for  different system parameters.   Equal power allocation  may not achieve the near optimal average secrecy rate in  large scale spectrum sharing networks.
\item
To provide insights into  system design from  an implementation viewpoint, we compare the average secrecy rate  of BF$\&$AN   with that of BF.  We observe  the same average secrecy rate boundary  for  BF$\&$AN and BF. The advantage of BF$\&$AN  over BF  on the average secrecy rate  is lost, when the aggregate interference from the PU and SU transmitters  is strong, such that it overtakes the effect of the  generated AN.
\item  We derive the asymptotic average secrecy rate and the asymptotic secrecy outage probability of the secondary network with BF$\&$AN at the SU transmitters   when  the number of SU transmit antennas $N_s$ goes to infinity, as presented in {\bf{Propositions 1, 2}}, and {\bf{3}}. Our asymptotic results well predict the  exact performance in the medium and large $N_s$ regime.
We determine the  antenna gap, which showcases the number of additional antennas  required to achieve the  same asymptotic average secrecy rate in more dense networks.
\end{enumerate}

\begin{figure}[t!]
    \begin{center}
        \includegraphics[width=2.5in]{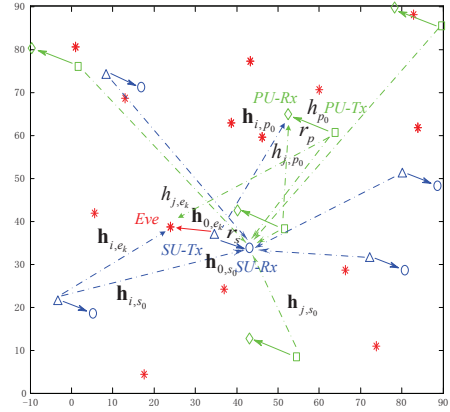}
        \caption{A realization of a large scale spectrum sharing network model describing the received signal  at a SU receiver.   In this network,  the green square represents the PU transmitter, the diamond represents the PU receiver, the triangle represents the SU transmitter, the circle represents the SU receiver, and the red star represents the eavesdropper. The blue solid line represents the secondary transmission, the green solid line represents the primary transmission, the blue dashed line represents the interference from the SU transmitter, and the green dashed line presents the interference from the PU transmitter.}
        \label{fig:11}
    \end{center}
\end{figure}


\section{System and Channel Model}
\label{sec:system}
We consider  secure communication in an underlay spectrum sharing network where the SU transmitters communicate with the corresponding SU receivers under the potential malicious attempt of multiple eavesdroppers. Each SU transmitter has $N_s$ antennas, and the remaining nodes in this model are all  single-antenna nodes.
As shown in Fig. \ref{fig:11},
we have a set of PU transmitters, SU transmitters, and eavesdroppers locations, denoted by $\Phi_p$, $\Phi_s$ and $\Phi _e$, in which $\Phi_p$, $\Phi_s$ and $\Phi _e$ follow independent homogeneous Poisson point processes (HPPPs) with densities $\lambda_p$, $\lambda_s$ and $\lambda _e$, respectively. This model is practical and representative of the decentralized networks, where the nodes are randomly deployed or have substantial mobility \cite{weber2010overview}. 
We assume that  each PU/SU transmitter communicates with its unique associated intended PU/SU receiver at a fixed distance,  respectively, in order to simplify the analysis and provide some design insights \cite{xi2013,zaidi2014,baccelli2006}. Note that this fixed distance assumption can be relaxed by taking into account the probability density distribution of the distance.



 The wireless channels are modeled as independent quasi-static Rayleigh fading.
  The eavesdroppers  interpret the secondary transmitter's signal without trying to modify it.
  In this complex CRNs, we consider the interference-limited case where the thermal noise is negligible compared with the aggregate interference from the other transmitters. Similar as \cite{xiang2010secure,xi2013}, we utilize the SIR to characterize the performance.


%
We mask the beamformed broadcast information with the AN at the SU transmitters to confuse the eavesdroppers. Each SU transmitter  broadcasts the information-bearing signals  and AN simultaneously. 
We assume that the perfect CSI between each SU transmitter and each SU receiver are available\footnote{In practice, perfect CSI may not be easy to obtained, as such, our analysis
provides the upper bound on the actual achievable secrecy performance.}.
The AN is transmitted in the null space of the intended SU receiver's channel, thus imposing no effect on the secondary channel, whereas degrading the eavesdropper's channel. We  denote the intended channel vector between the $i$th SU transmitter $\left({i} \in {\Phi _s}\right)$ and the corresponding SU receiver as $\bold{h}_{i,s_i}\in \mathcal{C}^{1 \times {N_s}}$, the channel state information (CSI) of which is known at the $i$th SU transmitter. An orthonormal basis of $\mathcal{C}^{{N_s}\times {N_s}}$ is generated at the $i$th SU transmitter   as $ {\big[ {{{{{\bf{h}}_{i,{s_i}}^\dag} } \mathord{\left/
 {\vphantom {{{{\bf{h}}_{i,{s_i}}}^\dag } {\left\| {{{\bf{h}}_{i,{s_i}}}} \right\|}}} \right.
 \kern-\nulldelimiterspace} {\left\| {{{\bf{h}}_{i,{s_i}}}} \right\|}}{\rm{ }},\:{{\bf{G}}_{i,{s_i}}}} \big]_{{N_s} \times {N_s}}}$ \footnote{$^\dag$ is the conjugate transpose operator.}, where ${\bold{G}_{i,s_i}}$ is a ${{N_s} \times ({N_s-1})}$ matrix. Note that each column of ${\bold{G}_{i,s_i}}$ and ${{{{{\bf{h}}_{i,{s_i}}^\dag} } \mathord{\left/
 {\vphantom {{{{\bf{h}}_{i,{s_i}	}}^\dag } {\left\| {{{\bf{h}}_{i,{s_i}}}} \right\|}}} \right.
 \kern-\nulldelimiterspace} {\left\| {{{\bf{h}}_{i,{s_i}}}} \right\|}}}$ are mutually orthogonal. We define $b_i$ as the information-bearing signal, and  $\bold{n}_A$ as the AN.
 The transmitted BF$\&$AN symbol vector is modeled as
 \begin{align}\label{transmit signal}
{{\bf{x}}_{s_i}} = \frac{{{{\bf{h}}_{i,s_i}^\dag}}}{{\left\| {{{\bf{h}}_{i,s_i}}} \right\|}}b_i + {{\bf{G}}_{i,s_i}}{\bf{n}}_A,
\end{align}
   where   $\mathbbm{E}\left\{ {{b_i}b_i^\dag } \right\} = \delta _s^2$, and $N_s-1$ elements of  $\bold{n}_A$ are independent and identically distributed (i.i.d) complex Gaussian random variables  with zero mean and variance  $\sigma _n^2$. Thus, the total transmit power per transmission $P_s$ is given by ${P_s} = {P_I} + {P_A}$, where the power allocated to the information signal is  ${P_I} = \sigma _s^2$ and the power allocated to the AN is ${P_A} = \left( {{N_s} - 1} \right)\sigma _n^2$. We also define $\mu $ as  the fraction of power assigned to the information signal, thus $ P_I = \mu {P_s}$ .


 In the primary network, we assume the typical PU receiver is located at the  origin of the coordinate system, and the distance between the typical PU transmitter and its associated PU receiver is $r_{p}$. According to the Slivnyak's theorem \cite{stoyanstochastic}, adding a probe point
to the HPPP at an arbitrary location does not affect the law
of the point process.  The received SIR at the typical PU receiver is given by
\begin{align}\label{PU_SIR_AN}
\gamma _{\tt SIR}^{p,AN} = \frac{{{{\left| {{h_{{p_0}}}} \right|}^2}{r_p}^{ - \alpha }}}{{{I_{p,{p_0}}} + P_p^{ - 1}{I_{s,{p_0}}}}},
\end{align}
where   
\begin{align}
{I_{p,{p_0}}} = {\sum _{j \in {\Phi _p}\backslash \left\{ 0 \right\}}}{\left| {{h_{j,{p_0}}}} \right|^2}{\left| {{X_{j,{p_0}}}} \right|^{ - \alpha }},
\end{align} 
and 
 \begin{align}
{I_{s,{p_0}}} = {\sum _{i \in {\Phi _s}}}\Big[ {\sigma _s^2{{\big| {{{\bf{h}}_{i,{p_0}}}\frac{{{\bf{h}}_{i,{s_i}}^\dag }}{{\left\| {{\bf{h}}_{i,{s_i}}^\dag } \right\|}}} \big|^2}} + \sigma _n^2{{\big\| {{{\bf{h}}_{i,{p_0}}}{{\bf{G}}_{i,{s_i}}}} \big\|}^2}} \Big]{\big| {{X_{i,{p_0}}}} \big|^{ - \alpha }}.
\end{align} 

In \eqref{PU_SIR_AN}, $\alpha$ is the path-loss exponent, ${{h_{{p_0}}}}$ is the channel fading gain  between the typical PU transmitter and the typical PU receiver,  ${{h}_{j,{p_0}}} $  and ${\left| {{X_{j,{p_0}}}} \right|}$ are the  interfering channel fading gain and distance between the $j$th PU transmitter and the typical PU receiver, respectively.  ${\bold{h}_{i,{p_0}}} \in \mathcal{C}^{1 \times {N_s}}$ and  ${\left| {{X_{i,{p_0}}}} \right|}$ are   the  interfering channel vector and distance between the $i$th SU transmitter and the typical PU receiver, respectively. ${P_p}$ is the transmit power at the PU  transmitter. Note that
${P_p}{I_{p,{p_0}}}$ is the interference from other PU  transmitters to the typical PU receiver, ${I_{s,{p_0}}}$ is the co-channel interference from the SU  transmitters to the typical PU receiver.


In the secondary network, we    assume ${{\bf{h}}_{{0,s_0}}}\in \mathcal{C}^{1 \times {N_s}}$ and $r_{s}$ to be the  channel vector and
  distance between the typical SU  transmitter and corresponding typical SU receiver.
 Note that  each SU transmitter transmits the signal vector  expressed as  \eqref{transmit signal}, we obtain the effective signal at the typical SU  receiver as 
{
\begin{align}
 {{\bf{h}}_{{0,s_0}}}{{\bf{x}}_{{s_0}}} = {{\bf{h}}_{{0,s_0}}}\frac{{{{\bf{h}}_{{0,s_0}}^\dag} }}{{\left\| {{{\bf{h}}_{{0,s_0}}}} \right\|}}{b_0} + {{\bf{h}}_{{0,s_0}}}{{\bf{G}}_{{0,s_0}}}{{\bf{n}}_A} = {\left\| {{{\bf{h}}_{0,{s_0}}}} \right\|}{b_0}.
 \end{align}}
  The received SIR at the typical SU receiver is given by
\begin{align}\label{SU_SIR_AN}
\gamma _{\tt SIR}^{s,AN} = \frac{{\sigma _s^2{{\left\| {{{\bf{h}}_{0,{s_0}}}} \right\|}^2}{r_s}^{ - \alpha }}}{{{I_{s,{s_0}}} + {P_p}{I_{p,{s_0}}}}},
\end{align}

where  
\begin{align}
{I_{p,{s_0}}} = {\sum _{j \in {\Phi _p}}}{\left| {{h_{j,{s_0}}}} \right|^2}{\left| {{X_{j,{s_0}}}} \right|^{ - \alpha }},
\end{align}  
and 
{\begin{align}
{I_{s,{s_0}}} = &{\sum _{i \in {\Phi _s}\backslash \left\{ 0 \right\}}}\Big[ {\sigma _s^2{{\big| {{{\bf{h}}_{i,{s_0}}}\frac{{{\bf{h}}_{i,{s_i}}^\dag }}{{\left\| {{\bf{h}}_{i,{s_i}}^\dag } \right\|}}} \big|^2}} + \sigma _n^2{{\big\| {{{\bf{h}}_{i,{s_0}}}{{\bf{G}}_{i,{s_i}}}} \big\|}^2}} \Big]
\nonumber \\&
{\big| {{X_{i,{s_0}}}} \big|^{ - \alpha }}.
\end{align}}
 In \eqref{SU_SIR_AN},   ${{h}_{j,{s_0}}}  $  and ${\left| {{X_{j,{s_0}}}} \right|}$ are the channel fading gain and distance between the $j$th PU transmitter and the typical SU receiver, respectively. ${\bold{h}_{i,{s_0}}} \in \mathcal{C}^{1 \times {N_s}}$ and  ${\left| {{X_{i,{s_0}}}} \right|}$ are   the  interfering channel vector and distance between the $i$th SU transmitter and the typical SU receiver, respectively.
Note that ${P_p}{I_{p,{s_0}}} $ is the co-channel interference from  the PU transmitters to the typical SU receiver, and ${I_{s,{s_0}}} $ is the aggregate interference from other SU transmitters  to the typical SU receiver.

In the eavesdropping channel, we consider the most detrimental eavesdropper  that has the highest SIR for a typical SU transmitter\cite{LunDong}. Note that eavesdroppers are only interested in the secondary transmissions, and interpret the primary transmissions as interference\footnote{This assumption is  practical since the primary networks operate in the Digital Video Broadcasting
(DVB) spectrum and broadcast the public service to  households, which do not have any confidential messages.}.
We  assume  ${{\bf{h}}_{{0},e_k}}\in \mathcal{C}^{1 \times {N_s}}$  to be the  channel vector between the typical SU  transmitter and an arbitrary eavesdropper $e_k \in {\Phi _e}$.
With  BF$\&$AN at the SU transmitter, the received signal from the typical SU transmitter at the $k$th eavesdropper
   is given by 
{\begin{align}   
   {{\bf{h}}_{{0},{e_k}}}{{\bf{x}}_{{s_0}}} = {{\bf{h}}_{{0},{e_k}}}\frac{{{{\bf{h}}_{{0,s_0}}^\dag} }}{{\left\| {{{\bf{h}}_{{0,s_0}}}} \right\|}}{b_0} + {{\bf{h}}_{{0},{e_k}}}{{\bf{G}}_{{0,s_0}}}{{\bf{n}}_A},
\end{align} }  
where the first part is the useful received information signal, and the second part is the received  AN.  As such,
the  SIR at  the most detrimental eavesdropper  is expressed as
\begin{align}\label{Eve_SIR_AN}
&\gamma _{\tt SIR}^{e,AN} = \mathop {\max }\limits_{{{e_k} \in {\Phi _e}}} \left\{ \frac{{\sigma _s^2{{\big| {{{\bf{h}}_{{0},{e_k}}}\frac{{{\bf{h}}_{{0,s_0}}^\dag }}{{\left\| {{\bf{h}}_{{0,s_0}}^\dag } \right\|}}} \big|^2}}{{\left| {{X_{e_k}}} \right|}^{ - \alpha }}}}{{{I_{s,{e_k}}} + {P_p}{I_{p,{e_k}}} + \sigma _n^2{I_{{s_0},{e_k},an}}}} \right\},
\end{align}
where   
\begin{align}
{I_{p,{e_k}}} = {\sum _{j \in {\Phi _p}}}{\left| {{h_{j,{e_k}}}} \right|^2}{\left| {{X_{j,{e_k}}}} \right|^{ - \alpha }},
\end{align}
\begin{align}
{I_{s,{e_k}}} = &{\sum _{i \in {\Phi _s}\backslash \left\{ 0 \right\}}}\Big[ {\sigma _s^2{{\big| {{{\bf{h}}_{i,{e_k}}}\frac{{{\bf{h}}_{i,{s_i}}^\dag }}{{\left\| {{\bf{h}}_{i,{s_i}}^\dag } \right\|}}} \big|^2}} + \sigma _n^2{{\big\| {{{\bf{h}}_{i,{e_k}}}{{\bf{G}}_{i,{s_i}}}} \big\|}^2}} \Big] \nonumber
\\ & {\left| {{X_{i,{e_k}}}} \right|^{ - \alpha }},
\end{align}
 and 
\begin{align}
{I_{{s_0},{e_k},an}} = {\left\| {{{\bf{h}}_{{0},{e_k}}}{{\bf{G}}_{{0,s_0}}}} \right\|^2}{\left| {{X_{e_k}}} \right|^{ - \alpha }}.
\end{align}
 Note that ${{h}_{j,{e_k}}}$ and ${\left| {{X_{j,{e_k}}}} \right|}$ are the channel fading gain and distance between the $j$th PU transmitter and the $k$th eavesdropper, respectively.
 ${\bold{h}_{i,{e_k}}} \in \mathcal{C}^{1 \times {N_s}}$ and ${\left| {{X_{i,{e_k}}}} \right|}$ are the channel vector and distance between the $i$th SU transmitter and the $k$th eavesdropper, respectively. ${\left| {{X_{{e_k}}}} \right|}$ is the distance between the typical SU transmitter and the $k$th eavesdropper.
It is known that ${P_p}{I_{p,{e_k}}}$ is the aggregate interference from PU  transmitters,  $\sigma _n^2{I_{{s_0},{e_k},an}}$ is the AN from the typical SU  transmitter, and ${I_{s,{e_k}}}$ is the aggregate interference from other SU transmitters.

We now define 
\begin{align}
{W_{{s_i},z}} = \sigma _s^2{\big| {{{\bf{h}}_{i,z}}\frac{{{\bf{h}}_{i,{s_i}}^\dag }}{{\left\| {{\bf{h}}_{i,{s_i}}^\dag } \right\|}}} \big|^2} + \sigma _n^2{\big\| {{{\bf{h}}_{i,z}}{{\bf{G}}_{i,{s_i}}}} \big\|^2},
\end{align}
 where $\bold{h}_{i,s_i}$ is the intended channel, $\bold{h}_{i,z}$
is the channel between the $i$th SU transmitter and  the non-intended receiver $z$ (except for  the $i$th SU receiver), and $z \in \left\{ {{p_0},{d_{0,}}{e_k}} \right\}$.
To facilitate the performance analysis,  we derive the  Laplace transform  of the aggregate interference from the SU  transmitters ${I_{s,z}} = \sum\nolimits_{i \in {\Phi _s}} {{W_{{s_i},z}}{{\left| {{X_{i,z}}} \right|}^{ - \alpha }}} $  in \eqref{PU_SIR_AN}, \eqref{SU_SIR_AN}, and \eqref{Eve_SIR_AN} as the following lemma.
\begin{lemma}
The Laplace transform of the   interference from  the SU  transmitters with BF$\&$AN to the non-intended receiver ${{{I_{s,z}}}}$ is derived as
\begin{align}\label{L_PDF_Wz}
&{\mathcal{L}_{{I_{s,z}}}}\left( s \right)  =
\left\{
\begin{array}{ll}
 \exp \Big( { - {\lambda _s}\pi P_s^{\frac{2}{\alpha }}{\Upsilon _1}\Gamma \left( {1 - \frac{2}{\alpha }} \right){s^{\frac{2}{\alpha }}}} \Big)
 \hspace{0.8cm}\mbox{$\mu  \ne \frac{1}{{{N_s}}}$},\\
\exp \Big( { - {\lambda _s}\pi {{\left( {\mu {P_s}} \right)}^{\frac{2}{\alpha }}}\Gamma \left( {{N_s} + \frac{2}{\alpha }} \right)\frac{{\Gamma \left( {1 - \frac{2}{\alpha }} \right)}}{{\Gamma \left( {{N_s}} \right)}}{s^{\frac{2}{\alpha }}}{\rm{ }}} \Big)\\
\hspace{5.6cm}\mbox{$\mu  = \frac{1}{{{N_s}}}$},
\end{array}
\right.
\end{align}
where
\begin{align}\label{Upsilon_1}
&{\Upsilon _1} ={\Big( {1 - \frac{{\left( {1 - \mu } \right)}}{{\left( {{N_s} - 1} \right)\mu }}} \Big)^{1 - {N_s}}}\Big[ {{{ \mu  }^{\frac{2}{\alpha }}}\Gamma \big( {1 + \frac{2}{\alpha }} \big)}-\frac{1}{\mu } \Big.
\nonumber \\
 &\Big. {{{\Big( {\frac{{\left( {1 - \mu } \right)}}{{{N_s} - 1}}} \Big)}^{1 + \frac{2}{\alpha }}}\sum\limits_{k = 0}^{{N_s} - 2} {{{\Big( {1 - \frac{{\left( {1 - \mu } \right)}}{{\left( {{N_s} - 1} \right)\mu }}} \Big)}^k}\frac{{\Gamma \left( {k + 1 + \frac{2}{\alpha }} \right)}}{{\Gamma \left( {k + 1} \right)}}} } \Big].
\end{align}
\end{lemma}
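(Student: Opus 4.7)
First, I would pin down the distribution of the per-interferer mark $W_{s_i,z}$. Conditional on $\mathbf{h}_{i,s_i}$, the matrix $[\mathbf{h}_{i,s_i}^{\dagger}/\|\mathbf{h}_{i,s_i}\|,\,\mathbf{G}_{i,s_i}]$ is unitary, while $\mathbf{h}_{i,z}$ is an independent row of i.i.d.\ $\mathcal{CN}(0,1)$ entries. Rotational invariance of that row renders $|\mathbf{h}_{i,z}\mathbf{h}_{i,s_i}^{\dagger}/\|\mathbf{h}_{i,s_i}\||^{2}$ and $\|\mathbf{h}_{i,z}\mathbf{G}_{i,s_i}\|^{2}$ \emph{independent}, the first a unit-mean exponential and the second $\mathrm{Gamma}(N_s-1,1)$. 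Hence $W_{s_i,z}$ is distributed as $\sigma_s^{2}X+\sigma_n^{2}Y$ with $X,Y$ independent as above, and its Laplace transform collapses to
\begin{align*}
\mathcal{L}_W(t)=\frac{1}{(1+\sigma_s^{2}t)(1+\sigma_n^{2}t)^{N_s-1}}.
\end{align*}

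Next, using i.i.d.\ marks and the PGFL of the homogeneous PPP $\Phi_s$,
\begin{align*}
\mathcal{L}_{I_{s,z}}(s)=\exp\!\Big(\!-2\pi\lambda_s\!\int_0^{\infty}\!\big(1-\mathcal{L}_W(sr^{-\alpha})\big)r\,dr\Big).
\end{align*}
The substitution $v=sWr^{-\alpha}$, combined with Fubini and one integration by parts, reduces this to
\begin{align*}
\mathcal{L}_{I_{s,z}}(s)=\exp\!\big(\!-\pi\lambda_s\,s^{2/\alpha}\Gamma(1-2/\alpha)\,\mathbb{E}[W^{2/\alpha}]\big),
\end{align*}
so the remaining task is to compute the fractional moment $\mathbb{E}[W^{2/\alpha}]$.

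I would then split cases on whether the two scales coincide. Writing $a=\sigma_s^{2}=\mu P_s$ and $b=\sigma_n^{2}=(1-\mu)P_s/(N_s-1)$, the equality $a=b$ holds iff $\mu=1/N_s$. In that degenerate case $W\sim\mathrm{Gamma}(N_s,a)$, and a direct moment calculation yields $\mathbb{E}[W^{2/\alpha}]=(\mu P_s)^{2/\alpha}\Gamma(N_s+2/\alpha)/\Gamma(N_s)$, exactly the second branch of \eqref{L_PDF_Wz}. For $\mu\neq 1/N_s$ I would partial-fraction-expand
\begin{align*}
\mathcal{L}_W(t)=\frac{C_0}{1+at}+\sum_{k=1}^{N_s-1}\frac{D_k}{(1+bt)^{k}},
\end{align*}
reading off $C_0=(1-b/a)^{-(N_s-1)}$ from $\lim_{t\to-1/a}(1+at)\mathcal{L}_W(t)$ and $D_k=-(b/a)(1-b/a)^{\,k-N_s}$ from the Laurent expansion in $u=1+bt$ about $u=0$. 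Inverting term by term writes the density of $W$ as one exponential plus $N_s-1$ Erlang densities, and each $2/\alpha$-moment is the elementary integral $\int_0^{\infty}w^{2/\alpha+k-1}e^{-w/b}/[(k-1)!\,b^{k}]\,dw=b^{2/\alpha}\Gamma(k+2/\alpha)/(k-1)!$. Collecting the contributions, factoring $P_s^{2/\alpha}C_0$, and re-indexing $m=k-1$ turns the Erlang sum into $\sum_{m=0}^{N_s-2}(1-b/a)^{m}\Gamma(m+1+2/\alpha)/m!$, which after substituting $a,b$ is exactly $\Upsilon_1$ in \eqref{Upsilon_1}.

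The main obstacle is really this final packaging step: the coefficients $D_k$ carry a minus sign and the unpleasant exponent $k-N_s$, and one has to verify that they combine with the explicit prefactor $C_0$ to produce the clean geometric-like sum in $\Upsilon_1$ with no stray terms. The PGFL reduction, the shot-noise integral, and the Gamma moments are routine once the joint distribution of the BF and AN components of $W_{s_i,z}$ has been correctly identified via rotational invariance.
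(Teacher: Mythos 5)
Your proposal is correct and follows essentially the same route as the paper: the PGFL of the PPP reduces the Laplace transform to the fractional moment $\mathbb{E}[W_{s_i,z}^{2/\alpha}]$, the case split on $\mu=1/N_s$ versus $\mu\neq 1/N_s$ is the same, and the moment is computed by integrating $x^{2/\alpha}$ against the density of $W_{s_i,z}$. The only (minor) difference is that the paper quotes the hypoexponential-type density of $W_{s_i,z}$ from the literature, whereas you re-derive it via partial fractions of $\mathcal{L}_W(t)=[(1+\sigma_s^2 t)(1+\sigma_n^2 t)^{N_s-1}]^{-1}$; your coefficients $C_0$ and $D_k$ reproduce exactly the paper's PDF and hence the same $\Upsilon_1$.
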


\begin{proof}
See Appendix A.
\end{proof}

\bigskip

\section{Exact Secrecy Perfromance}
\label{exacr secrecy}
In this section, we  first present the SU's permissive transmit power  region.  We then present the exact expressions for the average secrecy rate and the secrecy outage probability in large scale  spectrum sharing networks with BF$\&$AN at the SU transmitters. To obtain key insights through a comparison of BF$\&$AN with BF, we  derive  exact expressions for the  average secrecy rate and the secrecy outage probability in large scale  spectrum sharing networks  with BF at the SU transmitters.

\subsection{Beamforming and Artificial Noise Generation}

\subsubsection{PUs' Quality of Service Requirement}According to the rule of  underlay spectrum sharing networks, the concurrent transmission of PUs and SUs occurs under the prerequisite that   the QoS requirement of the   primary transmission  is satisfied \cite{yan2014GSC}.
As such, we first examine the transmit power operating region at the SU transmitters under  the primary network's QoS constraint.
The  QoS of primary network  is characterized that the outage probability should be no larger than the peak allowable value $\rho^p_{\tt out}$, which is expressed as \cite{yan2015GSC}
\begin{align}\label{PU_outageP}
{ P}^{\left\{ p \right\}}_{ out}={Pr}\big\{ \gamma^{p,AN}_{\tt SIR} < {{\gamma _{th}^{\left\{ p \right\}} }}\big \} < \rho^{\left\{ p \right\}}_{ out},
\end{align}
where $\gamma^{\left\{ p \right\}}_{ th}$ is the desired SIR threshold at the PU receiver.

In the following theorem, we present the SU's permissive transmit power region.
\begin{theo}
With BF$\&$AN at the SU transmitter,  the permissive transmit power region at the SU transmitter  is given as
${P_s} \in \left( {0,} \right.\left. {P_s^{\max }} \right]$, where
\begin{align}\label{Ps_AN}
&{P_s^{\max }}  =\left\{
\begin{array}{ll}
{\Big( { - \frac{\Theta }{{{\Upsilon _1}{\lambda _s}}}} \Big)^{\frac{\alpha }{2}}}{P_p}
 \hspace{1.6cm}\mbox{$\mu  \ne \frac{1}{{{N_s}}}$}\\
{\Big( { - \frac{{\Theta \Gamma \left( {{N_s}} \right)}}{{{\lambda _s}\Gamma \left( {{N_s} + \frac{2}{\alpha }} \right)}}} \Big)^{\frac{\alpha }{2}}}\frac{{{P_p}}}{\mu }
\hspace{0.6cm}\mbox{$\mu  = \frac{1}{{{N_s}}}$},
\end{array}
\right.
\end{align}
where $\Upsilon_1$ is given by \eqref{Upsilon_1},
and
\begin{align}\label{theta}
&\Theta  = \frac{{\ln \big( {1 - \rho^{\left\{ p \right\}}_{ out} \big)} \big)}}{{\pi \Gamma \left( {1 - \frac{2}{\alpha }} \right){{\big( {\gamma _{th}^{\left\{ p \right\}}} \big)}^{\frac{2}{\alpha }}}{r_p}^2}} + {\lambda _p}\Gamma \big( {1 + \frac{2}{\alpha }} \big).
\end{align}
\end{theo}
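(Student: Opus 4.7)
The plan is to translate the primary outage constraint into an explicit inequality in the SU transmit power $P_s$ by exploiting Rayleigh fading and the independence of the PU and SU point processes, and then to invert that inequality. The key observation is that $|h_{p_0}|^2$ is exponentially distributed, so for the threshold $s_0 \defequal \gamma_{th}^{\{p\}} r_p^{\alpha}$,
\begin{align}
\mathsf{P}^{\{p\}}_{out} \;=\; 1 - \mathbb{E}\!\left[\exp\!\big(-s_0 (I_{p,p_0} + P_p^{-1} I_{s,p_0})\big)\right].
\end{align}
Because $\Phi_p$ and $\Phi_s$ are independent HPPPs and the marks are independent, the expectation factors as $\mathcal{L}_{I_{p,p_0}}(s_0)\,\mathcal{L}_{I_{s,p_0}}(s_0/P_p)$.

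Next, I would plug in the two Laplace transforms. For the pure PPP interference $I_{p,p_0}$ with unit-mean exponential fading, the standard Campbell--Mecke computation gives
\begin{align}
\mathcal{L}_{I_{p,p_0}}(s) = \exp\!\big(-\lambda_p \pi\, \Gamma(1+2/\alpha)\,\Gamma(1-2/\alpha)\, s^{2/\alpha}\big),
\end{align}
while $\mathcal{L}_{I_{s,p_0}}(s)$ is supplied by \textbf{Lemma 1}. Substituting and collecting the common factor $\pi\,\Gamma(1-2/\alpha)\,s_0^{2/\alpha}$ in the exponent turns the constraint $\mathsf{P}^{\{p\}}_{out}\le \rho^{\{p\}}_{out}$ into
\begin{align}
\lambda_p \Gamma(1+2/\alpha) + \lambda_s \Upsilon_1 \left(P_s/P_p\right)^{2/\alpha} \;\le\; \frac{-\ln(1-\rho^{\{p\}}_{out})}{\pi\,\Gamma(1-2/\alpha)\,(\gamma_{th}^{\{p\}})^{2/\alpha}\, r_p^{2}}
\end{align}
in the generic case $\mu\neq 1/N_s$. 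Solving for $P_s$ and identifying the right-hand side minus $\lambda_p\Gamma(1+2/\alpha)$ with $-\Theta$ (as defined in \eqref{theta}) yields exactly the expression for $P_s^{\max}$ in \eqref{Ps_AN}. Monotonicity of the outage probability in $P_s$ (adding more SU power can only increase $I_{s,p_0}$ and hence $\mathsf{P}^{\{p\}}_{out}$) guarantees that the feasible set is the interval $(0, P_s^{\max}]$ rather than anything more exotic.

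The case $\mu=1/N_s$ is handled identically, but using the second branch of \textbf{Lemma 1}: the prefactor $P_s^{2/\alpha}\Upsilon_1$ is replaced by $(\mu P_s)^{2/\alpha}\,\Gamma(N_s+2/\alpha)/\Gamma(N_s)$, which after the same inversion produces the second branch of \eqref{Ps_AN} with the extra $1/\mu$ factor. The main obstacle I anticipate is bookkeeping rather than conceptual: the gamma-function arithmetic needed to map the Laplace-transform constants produced by Lemma~1 onto the compact form of $\Theta$ and $\Upsilon_1$ must be tracked carefully, and one should also verify implicitly that $-\Theta/(\Upsilon_1\lambda_s)>0$, so that taking the $\alpha/2$-power is legitimate; this corresponds precisely to the regime where the primary QoS is not already violated by PU self-interference alone.
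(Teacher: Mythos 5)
Your proposal is correct and follows essentially the same route as the paper's Appendix B: exploit the exponential distribution of $\left| {{h_{{p_0}}}} \right|^2$ to write the primary outage probability as one minus the Laplace transform of the aggregate interference evaluated at $\gamma _{th}^{\left\{ p \right\}} r_p^{\alpha}$, factor it into the PU and SU contributions (the latter via Lemma 1), and invert the resulting exponential inequality to isolate $P_s$. Your added remarks on monotonicity in $P_s$ and on the positivity of $-\Theta /\left( {{\Upsilon _1}{\lambda _s}} \right)$ are sensible sanity checks that the paper leaves implicit.
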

\begin{proof}
See Appendix B.
\end{proof}
The following  are some observations from \eqref{Ps_AN}.
\begin{itemize}
\item  For the fixed primary network's QoS constraint,  the maximum permissive transmit power at the SU transmitter can be relaxed by reducing the distance of the typical PU transceivers $r_p$, due to the fact that the PU can tolerate more interference from the SU transmitters.
\item With  increasing number of SU nodes and PU nodes per unit area, the transmit power constraint imposed on the SU transmitter is more severe.
This is due to the increasing aggregate interference from the SU transmitters and the other interfering PU transmitters.
\end{itemize}

\bigskip

To  study the impact of BF$\&$AN on the secrecy performance  within the permissive transmit power region, we  consider two important metrics:
the average secrecy rate and the secrecy outage probability.

\subsubsection{Average Secrecy Rate}

\bigskip

The instantaneous secrecy rate is defined as~\cite{LunDong}
\begin{align} \label{eq:CS}
R_{\tt se} =[\log_{2}\big(1+\gamma^{s,AN}_{\tt SIR} \big)-\log_{2}\big(1+\gamma _{\tt SIR}^{e,AN} \big)]^+.
\end{align}
where $[x]^+=\mathrm{max}\{x,0\}$. Here, $\gamma _{\tt SIR}^{e,AN} = \mathop {\max }\limits_{{e_k \in {\Phi _e}}} \big\{ {\gamma _{\tt SIR}^{{e_k},AN}} \big\}$  corresponds to the non-colluding eavesdropping case~\cite{Xiangyun2010}.

The average secrecy rate is the average of the instantaneous secrecy rate $R_{\tt se}$ over $\gamma _{\tt SIR}^{s,AN}$ and $\gamma _{\tt SIR}^{e,AN}$. As such, the average secrecy rate is given by \cite{lifeng2014physical}
\begin{align}\label{y_31}
{{\bar R}_{\tt se}} &= \int_0^\infty  {\int_0^\infty  {{R_{\tt se}}{f_{{\gamma^{s,AN}_{\tt SIR}}}}\left( {{x_1}} \right){f_{{\gamma _{\tt SIR}^{e,AN}}}}\left( {{x_2}} \right)} } d{x_1}d{x_2}\nonumber\\
&= \frac{1}{{\ln 2}}\int_0^\infty  {\frac{{{F_{{\gamma _{\tt SIR}^{e,AN}}}}\left( {{x_2}} \right)}}{{1 + {x_2}}}\big( {1 - {F_{{\gamma^{s,AN}_{\tt SIR}}}}\big( {{x_2}} \big)} \big)} d{x_2}.
\end{align}

In order to examine the average secrecy rate, we  derive the CDFs of SIRs  at the typical SU receiver and the most detrimental eavesdropper in the following {{Lemma 2}} and {{Lemma 3}}, respectively.

\begin{lemma}
With BF$\&$AN at the SU transmitters,  the CDF of SIR at the typical SU receiver is derived as
\begin{align}\label{CDFsecan2}
&F_{\gamma _{\tt SIR}^{s,AN}}^{\left\{ s \right\}}\big( {\gamma _{th}^{\left\{ s \right\}}} \big) =
1 - \exp \big( { - {\Lambda _l}{{\big( {\gamma _{th}^{\left\{ s \right\}}} \big)}^{\frac{2}{\alpha }}}r_s^2} \big) - \sum\limits_{m = 1}^{{N_s} - 1} {\frac{{{{\left( {r_s^\alpha } \right)}^m}}}{{m!{{\left( { - 1} \right)}^m}}}}
\nonumber \\
& \hspace{1.0cm} \sum \frac{{m!}}{{\prod\limits_{i = 1}^m {{m_i}!i{!^{{m_i}}}} }}\exp \big( { - {\Lambda _l}{{\big( {\gamma _{th}^{\left\{ s \right\}}} \big)}^{\frac{2}{\alpha }}}r_s^2} \big)
\nonumber \\
& \hspace{1.0cm} \prod\limits_{j = 1}^m {{{\Big( {\big( { - {\Lambda _l}{{\big( {\gamma _{th}^{\left\{ s \right\}}} \big)}^{\frac{2}{\alpha }}}} \big){{\left( {{r_s}} \right)}^{2 - j\alpha }}\prod\limits_{k = 0}^{j - 1} {\big( {\frac{2}{\alpha } - k} \big)} } \Big)}^{{m_j}}}},
\end{align}
where
\begin{align}\label{lambda}
&\Lambda _l = \left\{
\begin{array}{ll}
 \Lambda _2
 \hspace{0.5cm}\mbox{$\mu  = \frac{1}{{{N_s}}}$}\\
\Lambda _3
\hspace{0.5cm}\mbox{$\mu  \ne \frac{1}{{{N_s}}}$}.
\end{array}
\right.
\end{align}
In \eqref{lambda}, $\Lambda_2$ and $\Lambda_3$ are given by
\begin{align}\label{lambda2}
{\Lambda _2} = &\pi \Big( {{\lambda _s}\frac{{\Gamma \big( {{N_s} + \frac{2}{\alpha }} \big)}}{{\Gamma \big( {{N_s}} \big)}} + {\lambda _p}\Gamma \big( {1 + \frac{2}{\alpha }} \big){{\big( {\mu {\frac{{{P_s}}}{{{P_p}}}} } \big)}^{ - \frac{2}{\alpha }}}} \Big)
\Gamma \big( {1 - \frac{2}{\alpha }} \big) ,
\end{align}
\begin{align}\label{lambda3}
{\Lambda _3} =  \pi \Big( {{\lambda _p}\Gamma \big( {1 + \frac{2}{\alpha }} \big){\big({\frac{{{P_s}}}{{{P_p}}}}\big) ^{ - \frac{2}{\alpha }}} + {\lambda _s}{\Upsilon _1}} \Big)\Gamma \big( {1 - \frac{2}{\alpha }} \big){\left( \mu  \right)^{ - \frac{2}{\alpha }}},
\end{align}
respectively. Here, $\sum\limits_{i = 1}^m {i \cdot {m_i}}  = m$, and $\Upsilon_1$ is given by \eqref{Upsilon_1}, and $P_s$ is the maximum permissive transmit power, which is  given in \eqref{Ps_AN}.
\end{lemma}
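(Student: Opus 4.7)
The plan is to compute $F_{\gamma_{\tt SIR}^{s,AN}}^{\{s\}}(\gamma_{th}^{\{s\}}) = \Pr\{\gamma_{\tt SIR}^{s,AN} < \gamma_{th}^{\{s\}}\}$ by first conditioning on the two independent interferences $I_{s,s_0}$ and $I_{p,s_0}$, using the fact that the effective channel gain $\|\mathbf{h}_{0,s_0}\|^2$ is a sum of $N_s$ i.i.d. unit-mean exponentials, i.e.\ Gamma-distributed with shape $N_s$. Writing the SIR condition as $\|\mathbf{h}_{0,s_0}\|^2 < T$ with $T = \gamma_{th}^{\{s\}} r_s^\alpha (I_{s,s_0}+P_p I_{p,s_0})/(\mu P_s)$ and applying the standard CDF of the Erlang distribution
\begin{equation*}
\Pr\{\|\mathbf{h}_{0,s_0}\|^2 < T\,|\,T\} = 1 - e^{-T}\sum_{m=0}^{N_s-1}\frac{T^{m}}{m!},
\end{equation*}
I would then de-condition on $T$, giving
\begin{equation*}
F = 1 - \sum_{m=0}^{N_s-1}\frac{1}{m!}\,\mathbb{E}[T^m e^{-T}] = 1 - \sum_{m=0}^{N_s-1}\frac{(-1)^m}{m!}\,\frac{d^m\mathcal{L}_{T}(s)}{ds^m}\bigg|_{s=1}.
\end{equation*}

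Next I would evaluate $\mathcal{L}_T(s)$. Since $T$ scales the independent sum $I_{s,s_0}+P_p I_{p,s_0}$ by the deterministic constant $c = \gamma_{th}^{\{s\}} r_s^\alpha/(\mu P_s)$, and since the PU and SU processes are independent, the Laplace transform factorises as $\mathcal{L}_{I_{s,s_0}}(cs)\,\mathcal{L}_{I_{p,s_0}}(csP_p)$. The first factor is given directly by Lemma~1 (selecting the branch according to whether $\mu=1/N_s$ or not); the second is the classical Poisson shot-noise Laplace transform for Rayleigh fading, $\mathcal{L}_{I_{p,s_0}}(u)=\exp(-\lambda_p\pi\Gamma(1+2/\alpha)\Gamma(1-2/\alpha)u^{2/\alpha})$. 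Combining powers of $c$ with $P_s^{2/\alpha}$ inside $\Upsilon_1$ (or the corresponding $\Gamma(N_s+2/\alpha)/\Gamma(N_s)$ factor when $\mu = 1/N_s$), and using $(cs)^{2/\alpha}=(\gamma_{th}^{\{s\}})^{2/\alpha}r_s^2\mu^{-2/\alpha}s^{2/\alpha}/P_s^{2/\alpha}$, yields the compact form $\mathcal{L}_T(s)=\exp(-\Lambda_l\,(\gamma_{th}^{\{s\}})^{2/\alpha}r_s^{2}s^{2/\alpha})$, where $\Lambda_l$ is identified with $\Lambda_2$ or $\Lambda_3$ as in \eqref{lambda}.

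The remaining step, and the only computationally delicate one, is to differentiate $\exp(-As^{2/\alpha})$ with $A=\Lambda_l(\gamma_{th}^{\{s\}})^{2/\alpha}r_s^{2}$ exactly $m$ times at $s=1$. I would use Faà di Bruno's formula with the inner function $g(s)=-As^{2/\alpha}$, whose $j$th derivative is $g^{(j)}(s)=-A s^{2/\alpha-j}\prod_{k=0}^{j-1}(2/\alpha-k)$, hence at $s=1$ becomes $-A\prod_{k=0}^{j-1}(2/\alpha-k)$. The Bell-polynomial expansion then gives
\begin{equation*}
\frac{d^m\mathcal{L}_T(s)}{ds^m}\bigg|_{s=1} = e^{-A}\sum\frac{m!}{\prod_{i=1}^m m_i!\,(i!)^{m_i}}\prod_{j=1}^m\!\left(-A\prod_{k=0}^{j-1}\!\left(\tfrac{2}{\alpha}-k\right)\right)^{\!m_j},
\end{equation*}
the sum being over partitions $\sum_{i=1}^m i\,m_i=m$.

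Finally, I would pull a factor $r_s^{\alpha m}$ out of the sum by redistributing it as $r_s^{-j\alpha}$ into each $j$-th factor of the inner product (this is consistent because $\sum j m_j=m$), which turns $r_s^{2}$ into $r_s^{2-j\alpha}$ and reproduces the external $(r_s^\alpha)^m/(m!(-1)^m)$ prefactor exactly as written in \eqref{CDFsecan2}; the $m=0$ term is handled separately and yields the isolated $\exp(-\Lambda_l(\gamma_{th}^{\{s\}})^{2/\alpha}r_s^{2})$ term. The main obstacle I anticipate is purely algebraic bookkeeping: carefully tracking the sign $(-1)^m$ that arises both from the $m$-fold differentiation and from the $(-A)^{m_j}$ factors, and verifying that the two dichotomous branches of Lemma~1 feed through to give $\Lambda_2$ versus $\Lambda_3$ in \eqref{lambda2}--\eqref{lambda3} without losing the $\mu^{-2/\alpha}$ that originates from the constant $c$.
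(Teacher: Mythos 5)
Your proposal is correct and follows essentially the same route as the paper's Appendix C: Erlang CDF of $\|\mathbf{h}_{0,s_0}\|^2$, reduction of $\mathbb{E}[T^m e^{-T}]$ to $m$-th derivatives of the Laplace transform of the aggregate interference (built from Lemma 1 plus the standard PU shot-noise transform), and Fa\`a di Bruno for the derivatives of $\exp(-A s^{2/\alpha})$. The only cosmetic difference is that you differentiate in the Laplace variable $s$ at $s=1$ and then redistribute $r_s^{\alpha m}$, whereas the paper differentiates in an auxiliary variable $x$ evaluated at $x=r_s^\alpha$, which produces the $(r_s)^{2-j\alpha}$ factors directly; the two bookkeepings agree.
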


\begin{proof}
See Appendix C.
\end{proof}


Based on the SIR at the most detrimental eavesdropper in \eqref{Eve_SIR_AN}, we derive the CDF for  ${\gamma _{\tt  SIR}^{e,AN}}$ in the following lemma.
\begin{lemma}
With BF$\&$AN at the SU transmitters,  the CDF of SIR at the most detrimental eavesdropper   is derived as
\begin{align}\label{CDFeveAN_2}
&F_{\gamma _{\tt SIR}^{{e},AN}}^{\left\{ e \right\}}\big( {\gamma _{th}^{\{ e\} }} \big){\rm{ = }}\nonumber \\
&\hspace{0.5cm} \exp \Big( { - \frac{{\pi {\lambda _e}}}{{{\Lambda _l}}}{{\big( {\gamma _{th}^{\{ e\} }} \big)}^{ - \frac{2}{\alpha }}}{{\big( {\frac{{1 - \mu }}{{\big( {{N_s} - 1} \big)\mu }}\gamma _{th}^{\{ e\} } + 1} \big)}^{1 - {N_s}}}} \Big),
\end{align}
where $\Lambda _l$ is given in \eqref{lambda}. Note that $P_s$ is the maximum permissive transmit power, which is  given in \eqref{Ps_AN}.
\end{lemma}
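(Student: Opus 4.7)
The plan is to express the CDF of the maximum as the void probability of a thinned Poisson point process, and then reduce the retention probability at an eavesdropper located at distance $x$ from the typical SU transmitter to a product of a Laplace/MGF evaluated at $s = \gamma_{th}^{\{e\}} x^{\alpha}/\sigma_s^2$, so that the remaining 2D integral can be computed in polar coordinates.

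First, since $\gamma_{\tt SIR}^{e,AN} = \max_{e_k \in \Phi_e}\gamma_{\tt SIR}^{e_k,AN}$, the event $\{\gamma_{\tt SIR}^{e,AN} < \gamma_{th}^{\{e\}}\}$ is the event that the thinned process of eavesdroppers whose SIR exceeds $\gamma_{th}^{\{e\}}$ is empty. By the void probability of an HPPP,
\begin{align}
F_{\gamma_{\tt SIR}^{e,AN}}^{\{e\}}\big(\gamma_{th}^{\{e\}}\big) = \exp\Big(-\lambda_e \int_{\mathbb{R}^2} p(x)\, dx\Big),\quad p(x) = \Pr\big(\gamma_{\tt SIR}^{e_k,AN} \ge \gamma_{th}^{\{e\}} \,\big|\, |X_{e_k}|=x\big).
\end{align}

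Second, I will evaluate $p(x)$. Because $[\mathbf{h}_{0,s_0}^\dagger/\|\mathbf{h}_{0,s_0}\|,\mathbf{G}_{0,s_0}]$ is an orthonormal basis independent of $\mathbf{h}_{0,e_k}\sim \mathcal{CN}(\mathbf{0},\mathbf{I}_{N_s})$, the projections of $\mathbf{h}_{0,e_k}$ onto its columns are i.i.d. $\mathcal{CN}(0,1)$. Therefore $H\defequal|\mathbf{h}_{0,e_k}\mathbf{h}_{0,s_0}^\dagger/\|\mathbf{h}_{0,s_0}\||^2$ is $\exp(1)$ and, independently, $G\defequal\|\mathbf{h}_{0,e_k}\mathbf{G}_{0,s_0}\|^2$ is $\mathrm{Gamma}(N_s-1,1)$. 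Conditioning on everything except $H$ and using the exponential tail gives
\begin{align}
p(x) = \E\!\left[\exp\!\Big(-\frac{\gamma_{th}^{\{e\}} x^\alpha}{\sigma_s^2}\big(I_{s,e_k}+P_p I_{p,e_k}\big)\Big)\right]\cdot \E\!\left[\exp\!\Big(-\frac{\gamma_{th}^{\{e\}}\sigma_n^2}{\sigma_s^2}G\Big)\right].
\end{align}
The Gamma MGF yields the $x$-independent factor $\big(1+\tfrac{(1-\mu)\gamma_{th}^{\{e\}}}{(N_s-1)\mu}\big)^{1-N_s}$, recognising $\sigma_n^2/\sigma_s^2 = (1-\mu)/((N_s-1)\mu)$. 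For the interference factor, I will invoke Lemma 1 for $\mathcal{L}_{I_{s,e_k}}(\cdot)$ and the standard Rayleigh-fading HPPP Laplace transform $\mathcal{L}_{I_{p,e_k}}(s)=\exp(-\lambda_p \pi s^{2/\alpha}\Gamma(1+2/\alpha)\Gamma(1-2/\alpha))$, both evaluated at $s=\gamma_{th}^{\{e\}} x^\alpha/\sigma_s^2$ (with $s\mapsto sP_p$ for the PU term).

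Third, collecting the two exponents, the $x$-dependence appears only through $s^{2/\alpha}\propto x^2$, and the coefficient assembles exactly into $\Lambda_l\,(\gamma_{th}^{\{e\}})^{2/\alpha}$ as defined in \eqref{lambda}--\eqref{lambda3} (using $\sigma_s^2=\mu P_s$ to absorb the $\mu^{-2/\alpha}$ factor, and splitting into the two cases $\mu=1/N_s$ and $\mu\ne 1/N_s$ exactly as in Lemma 1). Hence $p(x)=A\exp(-\Lambda_l (\gamma_{th}^{\{e\}})^{2/\alpha} x^2)$ with $A=\big(1+\tfrac{(1-\mu)\gamma_{th}^{\{e\}}}{(N_s-1)\mu}\big)^{1-N_s}$. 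Switching to polar coordinates, $\int_{\mathbb{R}^2}e^{-B x^2}dx = \pi/B$, so
\begin{align}
\lambda_e\!\int_{\mathbb{R}^2}p(x)\,dx = \frac{\pi\lambda_e}{\Lambda_l}(\gamma_{th}^{\{e\}})^{-2/\alpha}\Big(1+\tfrac{(1-\mu)\gamma_{th}^{\{e\}}}{(N_s-1)\mu}\Big)^{1-N_s},
\end{align}
which, substituted into the void probability, produces \eqref{CDFeveAN_2}.

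The main obstacle is bookkeeping rather than a new idea: I must verify that the independent decomposition $H\perp G$ of the desired-signal gain and the self-AN gain at the eavesdropper is correct (so the two expectations factor), and that after substituting $s=\gamma_{th}^{\{e\}}x^\alpha/(\mu P_s)$ the combined exponent from $\mathcal{L}_{I_{s,e_k}}$ and $\mathcal{L}_{I_{p,e_k}}(\cdot\, P_p)$ telescopes precisely into the same $\Lambda_l$ used in Lemma 2. Once that algebraic matching is done, the final Gaussian integral is elementary.
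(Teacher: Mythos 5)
Your proposal is correct and follows essentially the same route as the paper's Appendix D: the exponential tail of the beamforming gain converts the per-eavesdropper success probability into a Laplace transform of the aggregate interference (with the self-AN term contributing the Gamma MGF factor $\big(1+\tfrac{(1-\mu)\gamma_{th}^{\{e\}}}{(N_s-1)\mu}\big)^{1-N_s}$), the PGFL of $\Phi_e$ gives the exponential of a spatial integral, and the polar-coordinate Gaussian integral yields $\pi\lambda_e/(\Lambda_l (\gamma_{th}^{\{e\}})^{2/\alpha})$. The only cosmetic difference is that you factor the AN gain out explicitly before invoking Lemma~1, whereas the paper lumps it into $I_{Eve,AN}$ and computes one combined Laplace transform; both also share the same implicit step of treating the interference as independent across eavesdroppers when pushing the expectation inside the product.
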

\begin{proof}
See Appendix D.
\end{proof}
Different from \cite{zhou2011throughput} and  \cite{xi2013} where only the approximation or bound on CDF of SIR at the eavesdropper was derived,  our result is derived in a simple   exact closed-form expression.
It is observed from \eqref{CDFeveAN_2} that the CDF of  ${\gamma _{\tt SIR}^{e,AN}}$ is an increasing function of $\lambda_s$ and $\lambda_p$, and a decreasing function of $\lambda_e$.

By substituting the CDF of ${\gamma^{s,AN}_{\tt SIR}}$ in \eqref{CDFsecan2} and the CDF of  ${\gamma^{e,AN}_{\tt SIR}}$ in \eqref{CDFeveAN_2} into \eqref{y_31}, we derive the average secrecy rate in the following theorem.
\textbf{\begin{theo}
With BF$\&$AN at the SU transmitters,  the average secrecy rate  is derived as
\begin{align}\label{y_32}
{{\bar R}_{se,AN}} = &\frac{1}{{\ln 2}}\int_0^\infty  {\frac{{\exp ( - \frac{{\pi {\lambda _e}}}{{{\Lambda _l}}}{x_2}^{ - \frac{2}{\alpha }}{{(\frac{{1 - \mu }}{{({N_s} - 1)\mu }}{x_2} + 1)}^{1 - {N_s}}})}}{{1 + {x_2}}}}
\nonumber\\
&\exp \big( { - {\Lambda _l}{x_2}^{\frac{2}{\alpha }}r_s^2} \big)\Big[ {1 + \sum\limits_{m = 1}^{{N_s} - 1} {\frac{{{{\left( {r_s^\alpha } \right)}^m}}}{{m!{{\big( { - 1} \big)}^m}}}} } \Big.\sum m!
\nonumber\\
&\Big. \prod\limits_{j = 1}^m {\frac{{{{(( - {\Lambda _l}{x_2}^{\frac{2}{\alpha }}){{({r_s})}^{2 - j\alpha }}\prod\limits_{k = 0}^{j - 1} {(\frac{2}{\alpha } - k)} )}^{{m_j}}}}}{{{m_j}!j{!^{{m_j}}}}}}  \Big]d{x_2},
\end{align}
where $\Lambda _l$ is given in \eqref{lambda}. Here, $P_s$ is the maximum permissive transmit power, which is  given in \eqref{Ps_AN}.
\end{theo}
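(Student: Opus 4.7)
The plan is to reduce the statement to a direct substitution into the integral identity \eqref{y_31}. That identity already expresses the average secrecy rate purely in terms of the marginal CDFs $F_{\gamma^{s,AN}_{\tt SIR}}$ and $F_{\gamma^{e,AN}_{\tt SIR}}$, and Lemmas 2 and 3 furnish exact closed-form expressions for both. So the proof will not require any fresh stochastic-geometry computation; all the measure-theoretic work is absorbed into those two lemmas and into Theorem 1 (which fixes $P_s = P_s^{\max}$ via \eqref{Ps_AN}).

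The first step is to justify \eqref{y_31} itself from the definition \eqref{eq:CS} of $R_{\tt se}$. Because the instantaneous secrecy rate is $[\log_2(1+\gamma^{s,AN}_{\tt SIR}) - \log_2(1+\gamma^{e,AN}_{\tt SIR})]^+$ and $\gamma^{s,AN}_{\tt SIR}$ and $\gamma^{e,AN}_{\tt SIR}$ are independent (they depend on disjoint parts of the point processes conditional on the tagged link), the double-integral form reduces, via a standard Fubini/integration-by-parts argument, to the single-integral form on the second line of \eqref{y_31}. This is a by-now standard manipulation used, e.g., in the reference cited just before \eqref{y_31}, so it is invoked rather than rederived.

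Next, I would substitute the eavesdropper CDF from Lemma 3, which gives the numerator factor
\begin{align*}
F_{\gamma _{\tt SIR}^{e,AN}}(x_2)=\exp\Bigl(-\tfrac{\pi\lambda_e}{\Lambda_l}\,x_2^{-2/\alpha}\bigl(\tfrac{1-\mu}{(N_s-1)\mu}x_2+1\bigr)^{1-N_s}\Bigr),
\end{align*}
and then compute $1-F_{\gamma^{s,AN}_{\tt SIR}}(x_2)$ from Lemma 2. The key algebraic observation is that the common factor $\exp\bigl(-\Lambda_l x_2^{2/\alpha} r_s^2\bigr)$ appears in both the leading term and every summand of \eqref{CDFsecan2}, so it can be pulled out, leaving the bracketed expression $[\,1+\sum_{m=1}^{N_s-1}\!\cdots]$ that appears in \eqref{y_32}. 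After pulling this factor out and multiplying the three pieces together under the integral, the claimed formula \eqref{y_32} is obtained term-by-term.

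The main obstacle is purely bookkeeping: the Bell-polynomial style sum $\sum m!/\prod m_i! i!^{m_i}$ over compositions $\sum_{i}i m_i=m$ arising from Faà di Bruno in Lemma 2 must be carried through without re-indexing errors, and the two definitions of $\Lambda_l$ in \eqref{lambda} (depending on whether $\mu=1/N_s$) have to be tracked consistently through $P_s = P_s^{\max}$ from Theorem 1. Beyond this, no analytic difficulty remains: the resulting improper integral in $x_2$ is not evaluated in closed form but is left in the exact form \eqref{y_32} suitable for numerical computation.
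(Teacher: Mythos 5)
Your proposal matches the paper's own derivation exactly: the paper proves Theorem 2 by nothing more than substituting the CDF of $\gamma^{s,AN}_{\tt SIR}$ from Lemma 2 and the CDF of $\gamma^{e,AN}_{\tt SIR}$ from Lemma 3 into the single-integral form \eqref{y_31}, after factoring the common term $\exp(-\Lambda_l x_2^{2/\alpha} r_s^2)$ out of the Fa\`a di Bruno sum, precisely as you describe. Your parenthetical justification of the independence underlying \eqref{y_31} is no more (and no less) rigorous than the paper's, which simply cites prior work for that identity, so the two arguments are essentially identical.
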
}



\begin{figure}[t!]
    \begin{center}
        \includegraphics[width=3.2in,height=2.6in]{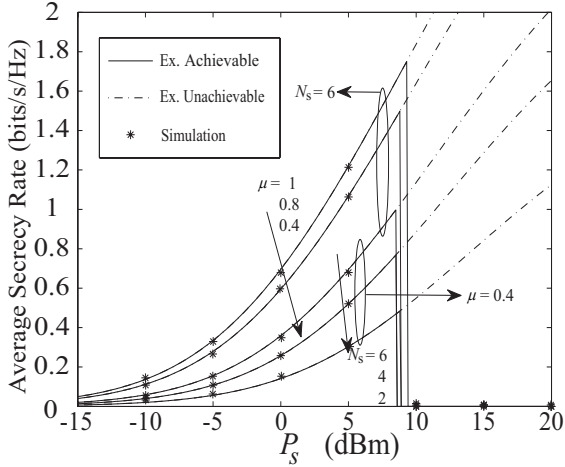}
        \caption{Average secrecy rate of a large scale spectrum sharing network with the transmit power adaptation
scheme. Parameters: $\lambda_e=\lambda_p=10^{-4}\;m^{-2}$, $\lambda_s=10^{-3}\; m^{-2}$, $\alpha=4$, $r_p=15 \;m$, $r_s=10 \; m$, ${P_p}=36$~dBm, and ${{\gamma _{th}^{\left\{ p \right\}} }}=0$~dBm.}
        \label{fig:1}
    \end{center}
\end{figure}

Note that the average secrecy rate given in \eqref{y_32} is applicable to  arbitrary $N_s$, $\mu$ and $\alpha$.

\subsubsection{Secrecy Outage Probability}
The secrecy outage is declared when the instantaneous  secrecy rate $R_{se}$ is less than the expected secrecy rate $R_s$. As such, the secrecy outage probability is defined as \cite{lifeng2014physical}
\begin{align}\label{SOP_1}
{P_{out}}\left( {{R_s}} \right) &= \Pr \left( {{R_{\tt se}} < {R_s}} \right)\nonumber \\
&\hspace{-1cm}=\int_0^\infty  {{f_{{\gamma _{\tt SIR}^{e,AN}}}}\left( {{x_2}} \right){F_{{\gamma _{\tt SIR}^{s,AN}}}}\left( {{2^{{R_s}}}\left( {1{\rm{ + }}{x_2}} \right) - 1} \right)} d{x_2}.
\end{align}

By substituting the probability density function (PDF) of $\gamma _{\tt SIR}^{e,AN}$ and CDF of $\gamma _{\tt SIR}^{s,AN}$ into \eqref{SOP_1}, we derive the secrecy outage probability  in the following theorem.
\textbf{\begin{theo}
With BF$\&$AN at the SU transmitters,  the secrecy outage probability is derived as
\begin{align}\label{SOP_2}
&{P_{out,AN}}\left( {{R_s}} \right) =   \int_0^\infty  {\frac{{\pi {\lambda _e}{x_2}^{ - \frac{2}{\alpha }}\big( {\frac{2}{\alpha }{x_2}^{ - 1}\big( {\frac{{1 - \mu }}{{\left( {{N_s} - 1} \right)\mu }}{x_2} + 1} \big) + 1} \big)}}{{{\Lambda _l}{{\big( {\frac{{1 - \mu }}{{\left( {{N_s} - 1} \right)\mu }}{x_2} + 1} \big)}^{{N_s}}}}}}
\nonumber\\
\hspace{-0.1cm}& \exp \big( { - \frac{{\pi {\lambda _e}}}{{{\Lambda _l}}}{x_2}^{ - \frac{2}{\alpha }}{{\big( {\frac{{1 - \mu }}{{\left( {{N_s} - 1} \right)\mu }}{x_2} + 1} \big)}^{1 - {N_s}}}} \big)\Big[ {1 - } \Big.
\nonumber\\
&\exp \big( { - {\Lambda _3}{{\left( {{2^{{R_s}}}\big( {1{\rm{ + }}{x_2}} \big) - 1} \right)}^{\frac{2}{\alpha }}}r_s^2} \big)\Big( {1 + \sum\limits_{m = 1}^{{N_s} - 1} {\frac{{{{\big( {r_s^\alpha } \big)}^m}}}{{m!{{\big( { - 1} \big)}^m}}}} \sum m!} \Big.
\nonumber\\
 &\Big. {\Big. \prod\limits_{j = 1}^m {\frac{{{{(( - {\Lambda _l}{{({2^{{R_s}}}(1{\rm{ + }}{x_2}) - 1)}^{\frac{2}{\alpha }}}){{({r_s})}^{2 - j\alpha }}\prod\limits_{k = 0}^{j - 1} {(\frac{2}{\alpha } - k)} )}^{{m_j}}}}}{{{m_j}!j{!^{{m_j}}}}}}  \Big)} \Big]d{x_2},
\end{align}
where $\Lambda _l$ is given in \eqref{lambda}. Here, $P_s$ is the maximum permissive transmit power, which is  given in \eqref{Ps_AN}.
\end{theo}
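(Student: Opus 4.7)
The plan is to derive the secrecy outage probability by working directly from the definition in \eqref{SOP_1} and substituting the two ingredient distributions obtained in Lemmas 2 and 3. Since \eqref{SOP_1} already expresses $P_{\mathsf{out}}(R_s)$ as the integral of the PDF of the eavesdropper SIR against the CDF of the legitimate SIR evaluated at the shifted argument $2^{R_s}(1+x_2)-1$, the only real work is to produce the PDF of $\gamma_{\tt SIR}^{e,AN}$ from its CDF and then perform a direct substitution.

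First I would obtain $f_{\gamma_{\tt SIR}^{e,AN}}(x_2)$ by differentiating the closed-form CDF in Lemma 3. Writing that CDF compactly as $F(y) = \exp\!\bigl(-a\, y^{-2/\alpha}(c y+1)^{1-N_s}\bigr)$ with $a = \pi\lambda_e/\Lambda_l$ and $c = (1-\mu)/((N_s-1)\mu)$, the chain rule gives the PDF as $F(y)$ times a rational-algebraic prefactor of the form $a y^{-2/\alpha}(cy+1)^{-N_s}\bigl[\tfrac{2}{\alpha}y^{-1}(cy+1) + (N_s-1)c\bigr]$. This reproduces (after regrouping the $(cy+1)^{-N_s}$ denominator and the $\tfrac{2}{\alpha} x_2^{-1}(c x_2+1) + \ldots$ numerator) the outer factor that sits in front of the exponential in \eqref{SOP_2}.

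Next I would import the CDF of $\gamma_{\tt SIR}^{s,AN}$ from Lemma 2 with its argument replaced by $\gamma_{th}^{\{s\}} \mapsto 2^{R_s}(1+x_2)-1$. The resulting $1 - \exp(\cdots) - \sum_{m=1}^{N_s-1}(\cdots)$ contribution is precisely the bracketed $[\,1 - \exp(-\Lambda_3 r_s^2 (\cdot)^{2/\alpha}) \cdot (1+\sum\cdots)\,]$ factor appearing in \eqref{SOP_2}, with the inner product over $j$ using the falling factorial $\prod_{k=0}^{j-1}(\tfrac{2}{\alpha}-k)$ and the multinomial weights indexed by the integer partitions satisfying $\sum i\,m_i = m$. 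Combining the PDF of $\gamma_{\tt SIR}^{e,AN}$ with the shifted CDF of $\gamma_{\tt SIR}^{s,AN}$ inside the integral from \eqref{SOP_1} yields \eqref{SOP_2} after relabeling the dummy variable.

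The main obstacle is purely bookkeeping: the shifted argument $2^{R_s}(1+x_2)-1$ must be propagated consistently into every occurrence of $\gamma_{th}^{\{s\}}$ inside the finite partition sum of Lemma 2, and one has to keep careful track of the signs coming from $(-1)^m$, the factor $\bigl(-\Lambda_l(\cdot)^{2/\alpha}\bigr)^{m_j}$, and the falling-factorial terms $\prod_{k=0}^{j-1}(\tfrac{2}{\alpha}-k)$. No new stochastic-geometry machinery is required beyond what already underlies Lemmas 2 and 3; the derivation is a mechanical substitution plus one application of the chain rule.
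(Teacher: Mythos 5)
Your proposal is correct and is essentially the paper's own (one-line) derivation: the authors likewise obtain Theorem 3 by differentiating the closed-form CDF of Lemma 3 to get the density of $\gamma_{\tt SIR}^{e,AN}$ and substituting it, together with Lemma 2's CDF evaluated at $2^{R_s}(1+x_2)-1$, into the definition \eqref{SOP_1}. One caveat: your chain rule correctly yields the bracket $\tfrac{2}{\alpha}x_2^{-1}(cx_2+1)+(N_s-1)c$ with $c=\tfrac{1-\mu}{(N_s-1)\mu}$, i.e.\ a final term $\tfrac{1-\mu}{\mu}$ rather than the ``$+1$'' printed in \eqref{SOP_2}, so your computation does not literally reproduce the displayed prefactor --- the discrepancy appears to be a typo in the stated theorem rather than an error in your derivation.
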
}

\subsection{Numerical Examples for BF $\&$ AN}

\begin{figure}[t!]
    \begin{center}
        \includegraphics[width=3.2in,height=2.6in]{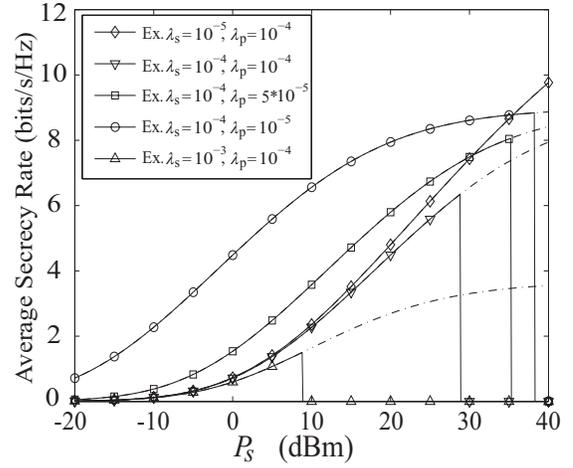}
        \caption{Average secrecy rate of a large scale spectrum sharing network with the transmit power adaptation
scheme. Parameters: $\lambda_e=10^{-4}$ $m^{-2}$, $N_s=6$, $\alpha=4$, $r_p=15$ $m$, $r_s=10$ $m$, $\mu=0.8$, ${P_p}=36$~dBm, and ${{\gamma _{th}^{\left\{ p \right\}} }}=0$~dBm.}
        \label{fig:2}
    \end{center}
\end{figure}

 \begin{enumerate}
 \item {\emph{Average Secrecy Rate Boundary}}
\smallskip
 \\ Fig.~\ref{fig:1} and  Fig.~\ref{fig:2} plot the average secrecy rate of large scale underlay spectrum sharing networks under the  primary network's QoS constraint $\rho^{\left\{ p \right\}}_{ out} =
0.15$ with the transmit power adaptation
scheme.
From these  figures, we see that
the exact analytical curves are well validated by  Monte Carlo simulations. The solid lines represent the operational achievable average secrecy rate where the  primary network's QoS constraint is always satisfied, i.e., $P_{out}^{pri,AN}\big( {\gamma _{th}^{\left\{ p \right\}}} \big) \le 0.15 $. The dashed lines represent the unachievable average secrecy rate where the   primary network's QoS constraint is violated, i.e., $P_{out}^{pri,AN}\big( {\gamma _{th}^{\left\{ p \right\}}} \big)  >  0.15 $.
{ We named the solid line as ``average secrecy rate boundary".}

\smallskip
\item \emph{Impact of $N_s$ and $\mu$ on the average secrecy rate}
\smallskip
\\
Fig.~\ref{fig:1} plots the average secrecy rate versus the SU's transmit power with various number of transmit antennas $N_s$ at the SU  and power allocation factor $\mu$, and we consider the same density for PUs, SUs, and eavesdroppers. The exact analytical curves are obtained from  \eqref{y_32}.
 We find that for  fixed  $\mu=0.4$, the average secrecy rate increases with increasing $N_s$.

\smallskip
\item \emph{Impact of $\lambda_s$ and $\lambda_p$ on the average secrecy rate}
\smallskip
\\
Fig.~\ref{fig:2} plots the average secrecy rate versus $P_s$ for various densities of PUs and SUs. We observe that  there is a shift of the ``average secrecy rate  wall"  to the left with increasing the density of PUs and SUs. This can be predicted from  \eqref{Ps_AN} that ${P_s^{\max }}$ is a decreasing function of $\lambda_p$ and $\lambda_s$. As expected,  the average secrecy rate decreases with increasing the density of SUs and PUs, due to the increased aggregate interference from the SUs and the PUs.

\smallskip
\item \emph{Optimal $\mu$  for the average secrecy rate}
\smallskip
\\

\begin{figure}[t!]
    \begin{center}
        \includegraphics[width=3.2in,height=2.6in]{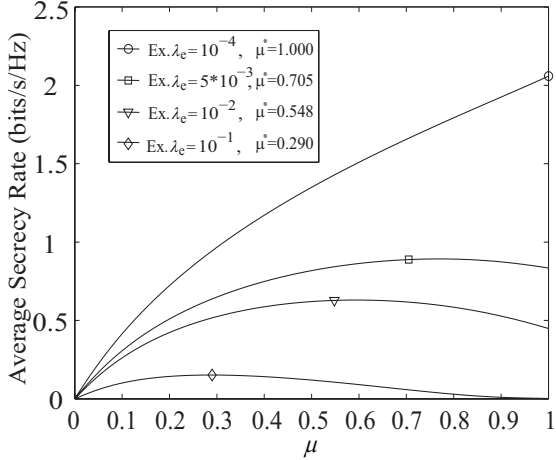}
        \caption{ Average secrecy rate of a large scale spectrum sharing network. Parameters: $\lambda_p=10^{-4}\;m^{-2}$, $ \lambda_s=10^{-3}\;m^{-2}$, $\alpha=3$, $r_p=15\;m$, $r_s=10\;m$, $N_s=6$, ${P_p}=15$~dBm, and ${{\gamma _{th}^{\left\{ p \right\}} }}=0$~dBm.}
        \label{fig:3}
    \end{center}
\end{figure}

Fig.~\ref{fig:3}  plots the average secrecy rate versus  $\mu$ for various densities of eavesdroppers $\lambda_e$. Here, we  use the maximum permissive transmit power to transmit the signal from SU, which is  given by \eqref{y_32}, and we set $P_s=P_s^{max}$ and  $\rho^{\left\{ p \right\}}_{ out}=0.1$. The triangles represent the maximum achievable average secrecy rate. For the scenarios where the density of eavesdroppers is higher than the density of  SUs, the average secrecy rate first increases  and then decreases with increasing $\mu$.  An optimal power allocation factor  $\mu^*$  exists at which the maximum average secrecy rate is achieved. For the region $\mu<\mu^*$, we  see that increasing the power  allocated to the useful signal ensures more message delivery  (increasing $C_{\tt su}$)  and plays a dominant role in improving the average secrecy rate; for the region $\mu>\mu^*$,  reducing the power allocated to the AN increases $C_{\tt E}$, and thus degrades the average secrecy rate. We conclude that a  tradeoff exists between increasing the capacity of secondary channel and decreasing the capacity of eavesdropping channel.
 Interestingly, we see from  Fig.~\ref{fig:3} that $\mu^*$ varies for different $\lambda_e$. We find that less power should be allocated to the AN for a network with less dense eavesdroppers. Out of expectation, the equal power allocation may not be a good strategy to achieve the maximum average secrecy rate.


\begin{figure}[t!]
    \begin{center}
        \includegraphics[width=3.2in,height=2.6in]{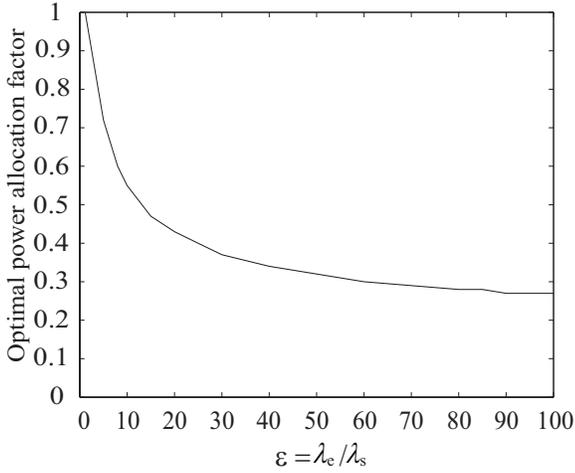}
        \caption{ Optimum $\mu$ for maximum average secrecy rate versus $\varepsilon$. Parameters:
         $\lambda_s=10^{-3}\;m^{-2}$, $\lambda_p=\lambda_s/10$, $N_s = 4$,  $\rho^{\left\{ p \right\}}_{ out}=0.1$, $\alpha=3$, $r_p=15\;m,$ $r_s=10\;m,$ ${P_p}=15$~dBm, and ${{\gamma _{th}^{\left\{ p \right\}} }}=0$~dBm.}
        \label{fig:4}
    \end{center}
\end{figure}


\smallskip
\item \emph{Impact of density ratio on the optimal $\mu$}
\smallskip
\\
To better illustrate the relationship between the optimal power allocation factor and the density of SUs and eavesdroppers.  We first define the ratio between $\lambda_e$ and $\lambda_s$ as  $\varepsilon  = {{{\lambda _e}} \mathord{\left/
 {\vphantom {{{\lambda _e}} {{\lambda _s}}}} \right.
 \kern-\nulldelimiterspace} {{\lambda _s}}}$. In Fig.~\ref{fig:4}, we  plot  $\mu^*$ versus the density ratio $\varepsilon$.  We set   $P_s=P_s^{max}$, $\lambda_s=10^{-3}\; m^{-2}$, and $\lambda_p=10^{-4}\; m^{-2}$.  We find that 1) The power allocated to AN should be increased with increasing the density ratio between the eavesdroppers and the SUs $\varepsilon$ to achieve the optimal average secrecy rate;
 2) For extremely low density ratio $\varepsilon$,  all of the power should be allocated to information signal without injecting AN to achieve the maximum average secrecy rate. This reveals that improving the information delivery is more important than combating the eavesdropping in this scenario.

\smallskip
\item \emph{Impact of $N_s$ and $\alpha$ on the secrecy outage probability}
\smallskip
\\
  Fig.~\ref{fig:5} plots the secrecy outage probability versus $\mu$ for various number of antennas at SU transmitter $N_s$.
  The exact analytical curves are obtained from  \eqref{SOP_2}, which are well validated by  Monte Carlo simulations.
   We assume $P_s=P_s^{max}$. In this setting, we see that the secrecy outage probability decreases with increasing $\mu$,
   and when $\mu$ approaches 1, the lowest secrecy outage probability is achieved. This is because when the density
   of eavesdroppers is small compared to that of SU,  the effect of delivering information overtakes the effect of
   combating the eavesdropping. As expected, the secrecy outage probability decreases with
   increasing $N_s$, which is due to the array gains brought by additional antennas.


\begin{figure}[t!]
    \begin{center}
    \vspace{-0.75cm}
        \includegraphics[width=3.4in,height=2.9in]{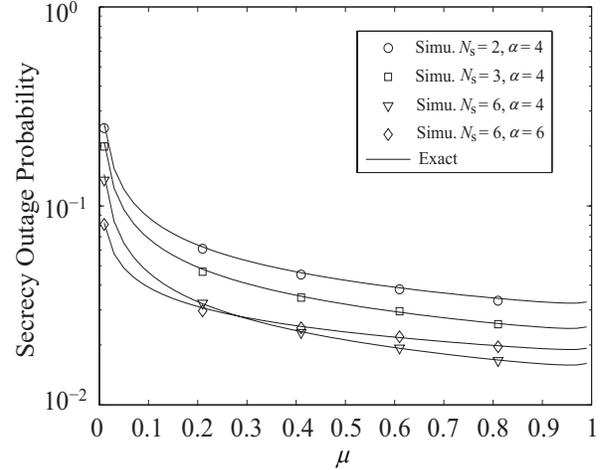}
        \caption{Secrecy outage probability versus $\mu$ for various  $N_s$ and $\alpha$. Parameters: $\rho^{\left\{ p \right\}}_{ out}=0.1$, $\lambda_e=10^{-4}\; m^{-2}$, $\lambda_p=10^{-4}\; m^{-2}$, $\lambda_s=10^{-3}\; m^{-2}$, $r_p=6\;m,$ $r_s=3\;m,$ $N_S=6,$ $Rs=1$, ${P_p}=15$~dBm, and ${{\gamma _{th}^{\left\{ p \right\}} }}=0$~dBm.}
        \label{fig:5}
    \end{center}
\end{figure}

 \end{enumerate}

\begin{figure}[t!]
\centering
\subfigure[Parameters: $\lambda_s=\lambda_e=10^{-4}\; m^{-2}$, $\lambda_p=10^{-5}\; m^{-2}$, $\rho^{\left\{ p \right\}}_{ out}=0.1$, $\mu=0.4$, $r_p=15\;m$, $r_s=10\;m$, and $\alpha=4$]{
\label{Fig.sub.61}
\includegraphics[width=3 in]{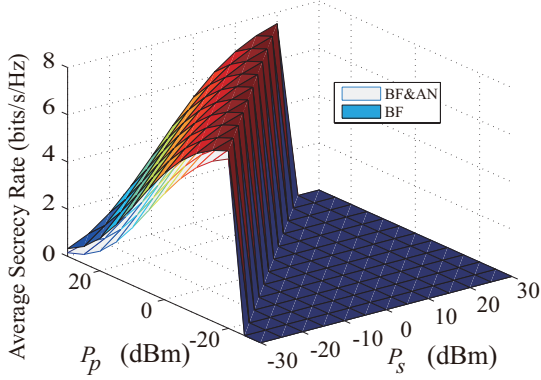}}
\subfigure[ Parameters: $\lambda_p=\lambda_s=10^{-6}\; m^{-2}$, $\lambda_e=10^{-5}\; m^{-2}$, $\rho^{\left\{ p \right\}}_{ out}=0.1$, $\mu=0.4$, $r_p=15\;m$, $r_s=10\;m$, and $\alpha=3$]{
\label{Fig.sub.62}
\includegraphics[width=3 in]{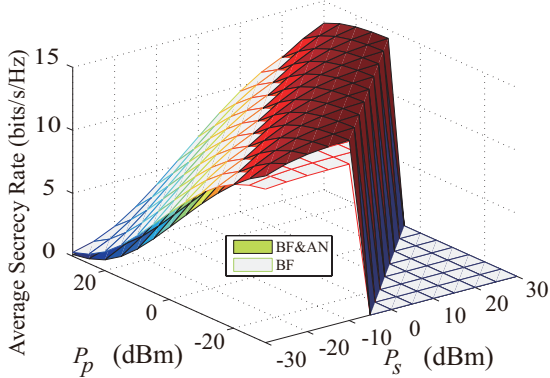}}
\caption{Comparison of average secrecy rate versus $P_s$ and $P_p$ between BF$\&$AN and BF.}
\label{Fig2}
\end{figure}

\subsection{Numerical examples for the comparison between BF$\&$AN and BF}

In this subsection, we compare the secrecy performance of our proposed network with BF$\&$AN to that with  BF, and examine the potential benefits of AN on the secrecy performance.
Note that BF can be viewed as a special case of BF$\&$AN with $\mu=1$.

In Fig.~\ref{Fig.sub.61} and Fig.~\ref{Fig.sub.62}, we plot the operational  achievable average secrecy rate region for the large scale spectrum sharing network with BF$\&$AN and   BF. We see the same  permissive transmit power region for   BF$\&$AN  and   BF in each figure. This is because, from the typical PU receiver's perspective,
both  AN and the information signal transmitted from SU  are viewed as  interference, which is equivalent to the case of BF. In both figures, we notice that  the same unachievable average secrecy rate region located in $
{P_s} \in \left( {0,30} \right)$~dBm  with  $ {P_p} \in \left( { 0, - 30} \right)$~dBm. 	This can be explained by the fact that the QoS constraint is severely violated in this setting when the aggregate interference is much higher compared to the  useful signal received at PU.

In contrast to the fact that it is often beneficial to emit AN on top of the information-bearing signal in the physical layer
security model with fixed nodes \cite{hong2013enhancing},
 we see from Fig.~\ref{Fig.sub.61} that  BF  outperforms  BF$\&$AN  or has the same performance as BF$\&$AN in all operational region. This is because the strong aggregate interference from the PUs
 overtakes the effect of the AN
generated by SU. In this case, more power needs to be allocated to transmit information signal at SUs to contend with  the interference from PUs.   Interestingly, Fig.~\ref{Fig.sub.62} shows that   BF$\&$AN  outperforms  BF  in some regions, owing to the fact that the effect of  AN generated by SU  overtakes the relatively low aggregate interference from PUs. In this case, more power should be allocated to transmit AN to disrupt the eavesdropping.

%
%

\section{Large Antenna arrays analysis}
\label{asymptotic secrecy}

In this section, we study the asymptotic secrecy performance of the large scale spectrum sharing networks where the SU  transmitters are equipped with large antenna arrays. We examine  the asymptotic behavior of the average secrecy rate and the secrecy outage probability, when the  number of antennas at the SU transmitters goes to infinity.


We first present the {Lemma 4} based on the law of large numbers as follows:
\begin{lemma}
 $\mathop {\lim }\limits_{{N_s} \to \infty } {\left\| {{{\bf{h}}_z}} \right\|^2} = {N_s}$, and $\mathop {\lim }\limits_{{N_s} \to \infty } {\left\| {{{\bf{h}}_{i,z}}{{\bf{G}}_{i,{s_i}}}} \right\|^2} = {N_s} - 1$.
\end{lemma}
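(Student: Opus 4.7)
The plan is to invoke the strong law of large numbers (SLLN) on sums of i.i.d. unit-mean exponential random variables, together with the unitary invariance of the complex Gaussian distribution for the null-space projection. Both statements should be understood in the asymptotic equivalence sense, i.e.\ $\|\mathbf{h}_z\|^2/N_s \to 1$ and $\|\mathbf{h}_{i,z}\mathbf{G}_{i,s_i}\|^2/(N_s-1)\to 1$ almost surely as $N_s\to\infty$.

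For the first claim, I would observe that, under the quasi-static Rayleigh fading model assumed in Section II, the $N_s$ entries of $\mathbf{h}_z$ are i.i.d.\ $\mathcal{CN}(0,1)$, so the squared magnitudes $|h_{z,n}|^2$ are i.i.d.\ exponentially distributed with unit mean. Writing $\|\mathbf{h}_z\|^2=\sum_{n=1}^{N_s}|h_{z,n}|^2$ and applying Kolmogorov's SLLN yields $N_s^{-1}\|\mathbf{h}_z\|^2 \to \mathbb{E}\{|h_{z,1}|^2\}=1$ almost surely, which is the stated asymptotic equivalence $\|\mathbf{h}_z\|^2 \sim N_s$.

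For the second claim, the key structural fact is that $\mathbf{G}_{i,s_i}\in\mathbb{C}^{N_s\times(N_s-1)}$ has orthonormal columns by construction (it extends $\mathbf{h}_{i,s_i}^\dagger/\|\mathbf{h}_{i,s_i}\|$ to an orthonormal basis of $\mathbb{C}^{N_s}$), and, crucially, $\mathbf{h}_{i,z}$ is independent of $\mathbf{h}_{i,s_i}$ because $z\in\{p_0,d_0,e_k\}$ corresponds to a different receiver than the intended SU receiver $s_i$. Conditioned on $\mathbf{h}_{i,s_i}$, the vector $\mathbf{h}_{i,z}\mathbf{G}_{i,s_i}$ is therefore an isotropic linear transformation of an i.i.d.\ $\mathcal{CN}(0,\mathbf{I}_{N_s})$ row vector by a partial isometry; by the unitary invariance of the complex Gaussian distribution, its $N_s-1$ entries are i.i.d.\ $\mathcal{CN}(0,1)$, independent of $\mathbf{h}_{i,s_i}$. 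Hence $\|\mathbf{h}_{i,z}\mathbf{G}_{i,s_i}\|^2$ is again a sum of $N_s-1$ i.i.d.\ unit-mean exponentials, and SLLN gives $(N_s-1)^{-1}\|\mathbf{h}_{i,z}\mathbf{G}_{i,s_i}\|^2 \to 1$ a.s., i.e.\ $\|\mathbf{h}_{i,z}\mathbf{G}_{i,s_i}\|^2 \sim N_s-1$.

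The only substantive step — and the one worth stating carefully — is the unitary-invariance argument in the second part, since one must verify that independence between $\mathbf{h}_{i,z}$ and $\mathbf{h}_{i,s_i}$ makes the conditional distribution of $\mathbf{h}_{i,z}\mathbf{G}_{i,s_i}$ coincide with its unconditional distribution, so that the SLLN can be applied without conditioning. The rest is just a direct application of SLLN to a sum of i.i.d.\ exponentials.
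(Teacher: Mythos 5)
Your proposal is correct and follows essentially the same route as the paper: the paper's one-line proof likewise rests on the facts that $\left\| {{{\bf{h}}_z}} \right\|^2 \sim \mathrm{Gamma}(N_s,1)$ and $\left\| {{{\bf{h}}_{i,z}}{{\bf{G}}_{i,{s_i}}}} \right\|^2 \sim \mathrm{Gamma}(N_s-1,1)$ combined with the law of large numbers. The only difference is that you derive the second distributional fact explicitly via unitary invariance, whereas the paper cites it from prior work.
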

\begin{proof}
This is due to the fact that ${\left\| {{{\bf{h}}_z}} \right\|^2}\mathop  \sim Gamma\left( {{N_s},1} \right)$ and   ${\left\| {{{\bf{h}}_{i,z}}{{\bf{G}}_{i,{s_i}}}} \right\|^2} \mathop  \sim Gamma\left( {{N_s} - 1,1} \right)$.
\end{proof}

By using {Lemma 4},  we  rewrite the SIR at the typical PU in \eqref{PU_SIR_AN}   as
\begin{align}\label{SINRasypri}
\gamma _{\tt SIR}^{p,\infty } \mathop  \sim \limits^d  \frac{{{{\left| {{h_{{p_0}}}} \right|}^2}{r_p}^{ - \alpha }}}{{{I_{p,{p_0}}} + \eta {I_{s,{p_0},\infty }}}},
\end{align}
where 
{
\begin{align}
{I_{p,{p_0}}} = {\sum _{j \in {\Phi _p}\backslash \left\{ 0 \right\}}}{\left| {{h_{j,{p_0}}}} \right|^2}{\left| {{X_{j,{p_0}}}} \right|^{ - \alpha }}
\end{align}
  and 
 \begin{align} 
  {I_{s,{p_0},\infty }} = {\sum _{i \in {\Phi _s}}}\Big[ {\mu {{\big| {{{\bf{h}}_{i,{p_0}}}\frac{{{\bf{h}}_{i,{s_i}}^\dag }}{{\big\| {{\bf{h}}_{i,{s_i}}^\dag } \big\|}}} \big|}^2} + \big( {1 - \mu } \big)} \Big]{\big| {{X_{i,{p_0}}}} \big|^{ - \alpha }}.
  \end{align}}

For large $N_s$, the SIR at typical SU  is given as
\begin{align}\label{SINRasysec}
\gamma _{\tt SIR}^{s,\infty} \mathop  \sim \limits^d  \frac{{\mu {N_s}{r_s}^{ - \alpha }}}{{{I_{s,{s_0},\infty }} + {\eta ^{ - 1}}{I_{p,{s_0}}}}},
\end{align}
where 
{
\begin{align}
{I_{p,{s_0}}} = {\sum _{j \in {\Phi _p}}}{\left| {{h_{j,{s_0}}}} \right|^2}{\left| {{X_{j,{s_0}}}} \right|^{ - \alpha }}
\end{align}
 and  
 \begin{align} 
 {I_{s,{s_0},\infty }} = {\sum _{i \in {\Phi _s}\backslash \left\{ 0 \right\}}}\Big[ {\mu {{\big| {{{\bf{h}}_{i,{s_0}}}\frac{{{\bf{h}}_{i,{s_i}}^\dag }}{{\left\| {{\bf{h}}_{i,{s_i}}^\dag } \right\|}}} \big|}^2} + \left( {1 - \mu } \right)} \Big]{\left| {{X_{i,{s_i}}}} \right|^{ - \alpha }}.
 \end{align}} 
 From \eqref{SINRasysec}, we find that the received SIR at typical SU scale by $N_s$.

For large $N_s$, the SIR  at the most detrimental eavesdropper   is given as
\begin{align}\label{Eve_SIR_AN_inf}
&\gamma _{\tt SIR}^{e,\infty} =  \mathop {\max }\limits_{{{e_k} \in {\Phi _e}}} \left\{ {\gamma _{\tt SIR}^{{e_k},\infty}} \right\},
\end{align}
where
\begin{align}\label{SINRasyeav_inf}
\gamma _{\tt SIR}^{{e_k},\infty }  \mathop  \sim \limits^d  \frac{{\mu {{\big| {{{\bf{h}}_{{0},{e_k}}}\frac{{{\bf{h}}_{i,{s_i}}^\dag }}{{\left\| {{\bf{h}}_{i,{s_i}}^\dag } \right\|}}} \big|}^2}{{\left| {{X_{e_k}}} \right|}^{ - \alpha }}}}{{{I_{{s_0},{e_k},\infty }} + {\eta ^{ - 1}}{I_{p,{e_k}}} + \left( {1 - \mu } \right){{\left| {{X_{e_k}}} \right|}^{ - \alpha }}}},
\end{align}
where 
{\begin{align}
{I_{p,{e_k}}} = {\sum _{j \in {\Phi _p}}}{\left| {{h_{j,{e_k}}}} \right|^2}{\left| {{X_{j,{e_k}}}} \right|^{ - \alpha }}
\end{align}
 and 
 \begin{align}
 {I_{{s_0},{e_k},\infty }} = {\sum _{i \in {\Phi _s}\backslash \left\{ 0 \right\}}}\Big[ {\mu {{\left| {{{\bf{h}}_{i,{e_k}}}\frac{{{\bf{h}}_{i,{s_i}}^\dag }}{{\left\| {{\bf{h}}_{i,{s_i}}^\dag } \right\|}}} \right|}^2} + \left( {1 - \mu } \right)} \Big]{\left| {{X_{i,{e_k}}}} \right|^{ - \alpha }}.
 \end{align}}

Based on the SIR at the typical PU in \eqref{SINRasypri},  with the help of the Laplace transform in \cite[eq. (8)]{haenggi2009stochastic}, and similar method provided in the proof for the {\bf{Theorem 1}}, we  present the permissive transmit power region at the SU transmitter at large $N_s$  in the following proposition.

\begin{ppro}
With BF$\&$AN at the SU transmitters,  the permissive transmit power region for the SU transmitter at large $N_s$ is given as ${P_s} \in \left( {0,} \right.\left. {P_s^{\max }} \right]$, where
\begin{align}\label{Psinf_AN_asy}
&{P_s^{max}}  ={\Big[ {-\Theta {{\big( {\int_0^\infty  {{{\big( {\mu t + \big( {1 - \mu } \big)} \big)}^{\frac{2}{\alpha }}}{e^{ - t}}dt} } \big)}^{ - 1}}{\lambda _s}^{ - 1}} \Big]^{\frac{\alpha }{2}}}{P_p},
\end{align}
and $\Theta$ is given by \eqref{theta}.
%
\end{ppro}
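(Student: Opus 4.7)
The plan is to mirror the proof of \textbf{Theorem 1}, but with the asymptotic SIR expression in \eqref{SINRasypri} replacing the finite-$N_s$ expression in \eqref{PU_SIR_AN}. Since $|h_{p_0}|^2 \sim \mathrm{Exp}(1)$ and $P_p I_{p,p_0}$ is independent of $P_s I_{s,p_0,\infty}$, the QoS constraint \eqref{PU_outageP} becomes
\begin{equation*}
1 - \mathcal{L}_{I_{p,p_0}}\!\bigl(\gamma_{th}^{\{p\}} r_p^\alpha\bigr)\,\mathcal{L}_{I_{s,p_0,\infty}}\!\bigl(\gamma_{th}^{\{p\}} r_p^\alpha / P_p\bigr) \;\le\; \rho_{out}^{\{p\}}.
\end{equation*}
The first factor is the standard HPPP-Rayleigh Laplace transform from \cite{haenggi2009stochastic}, namely $\exp\bigl(-\lambda_p \pi \Gamma(1-2/\alpha)\Gamma(1+2/\alpha)(\gamma_{th}^{\{p\}} r_p^\alpha)^{2/\alpha}\bigr)$, which already appears inside $\Theta$ in \eqref{theta}.

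First I would derive the asymptotic Laplace transform $\mathcal{L}_{I_{s,p_0,\infty}}(s)$ directly from the PGFL of $\Phi_s$. By \textbf{Lemma 4}, each SU mark in $I_{s,p_0,\infty}$ reduces to $\mu G_i + (1-\mu)$ where $G_i = \bigl|{\bf h}_{i,p_0}{\bf h}_{i,s_i}^\dag/\|{\bf h}_{i,s_i}\|\bigr|^2 \sim \mathrm{Exp}(1)$, since the AN-induced term $\sigma_n^2 \|{\bf h}_{i,p_0}{\bf G}_{i,s_i}\|^2$ has been replaced by its deterministic large-$N_s$ limit. Applying the PGFL and the change of variable $v = s\,t\,r^{-\alpha}$ gives
\begin{equation*}
\mathcal{L}_{I_{s,p_0,\infty}}(s) = \exp\!\Bigl(-\lambda_s \pi\, \Gamma(1-2/\alpha)\, \mathbb{E}_G\bigl[(\mu G + (1-\mu))^{2/\alpha}\bigr]\, s^{2/\alpha}\Bigr),
\end{equation*}
and since $G \sim \mathrm{Exp}(1)$ the expectation equals $\int_0^\infty (\mu t + (1-\mu))^{2/\alpha} e^{-t}\,dt$.

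Next I would substitute both Laplace transforms back into the QoS inequality, take $\ln(1-\rho_{out}^{\{p\}})$ of both sides, and collect the $\lambda_p$ term with the right-hand side into the definition of $\Theta$ already fixed in \eqref{theta}. What remains is a single inequality linear in $(P_s/P_p)^{2/\alpha}$, yielding
\begin{equation*}
\lambda_s\,\Bigl(\tfrac{P_s}{P_p}\Bigr)^{2/\alpha}\!\int_0^\infty\!(\mu t + (1-\mu))^{2/\alpha} e^{-t}\,dt \;\le\; -\Theta,
\end{equation*}
which rearranges exactly to \eqref{Psinf_AN_asy}. The maximum permissive transmit power is obtained by saturating this inequality.

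The only mildly delicate step is verifying that the degenerate case $\mu = 1/N_s$ from \textbf{Lemma 1} need not be split off here: because \textbf{Lemma 4} replaces the random quadratic form $\|{\bf h}_{i,p_0}{\bf G}_{i,s_i}\|^2$ by the deterministic value $N_s-1$ before any integration, the piecewise structure in \eqref{L_PDF_Wz} collapses and a single unified expression suffices in the asymptotic regime. The main obstacle is therefore just bookkeeping of constants to align $\Theta$ with its definition in \eqref{theta}; no new analytic machinery beyond the Rayleigh-PPP Laplace transform and \textbf{Lemma 4} is required.
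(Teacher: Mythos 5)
Your proposal is correct and follows essentially the same route the paper intends: it invokes Lemma 4 to reduce the SU marks to $\mu G_i + (1-\mu)$ with $G_i\sim\mathrm{Exp}(1)$, computes the resulting shot-noise Laplace transform via the PGFL (the paper's reference to \cite[eq.~(8)]{haenggi2009stochastic}), and then inverts the outage constraint exactly as in the proof of Theorem 1, which is precisely how the paper arrives at \eqref{Psinf_AN_asy}. Your remark that the $\mu=1/N_s$ case split from Lemma 1 collapses in the asymptotic regime is also consistent with the single unified expression in the proposition.
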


\bigskip
To facilitate the analysis of the average secrecy rate and the secrecy outage probability, we need to  first derive the asymptotic CDFs of $\gamma _{\tt SIR}^{e,\infty}$ and $\gamma _{\tt SIR}^{s,\infty}$.
Using the method presented in Appendix B, we derive the asymptotic CDF of  $\gamma _{\tt SIR}^{e,\infty}$ given in \eqref{Eve_SIR_AN_inf} as
\begin{align}\label{CDFeveAN_2large}
&\hspace{-0.3cm}F_{\gamma _{\tt SIR}^{e,\infty}} \big( {\gamma _{th}^{\{ e\} }} \big) = \exp \Big( { - \frac{{\pi {\lambda _e}{e^{\big( {1 - {\mu ^{ - 1}}} \big)\gamma _{th}^{\{ e\} }}}}}{{\Xi \Gamma \big( {1 - \frac{2}{\alpha }} \big)}}{{\big( {\frac{{\mu {P_s} }}{{\gamma _{th}^{\{ e\} }}{P_p}}} \big)}^{\frac{2}{\alpha }}}} \Big),
\end{align}
where
\begin{align}\label{Xi_1}
\Xi  = {\lambda _p}\Gamma \big( {1 + \frac{2}{\alpha }} \big) + {\lambda _s}{\big({\frac{{{P_s}}}{{{P_p}}}}\big)^{\frac{2}{\alpha }}}\int_0^\infty  {{{\big( {\mu t + \big( {1 - \mu } \big)} \big)}^{\frac{2}{\alpha }}}{e^{ - t}}dt} .
\end{align}

To derive the asymptotic CDF of  $\gamma _{\tt SIR}^{s,\infty}$, we first present
\begin{align}\label{CDFsecanprf1large}
&F_{\gamma _{\tt SIR}^{s,\infty}} \left( {\gamma _{th}^s} \right) =  \int_{\frac{{\mu {N_s}}}{{\gamma _{th}^s{r_{{s_0}}}^\alpha }}}^\infty  {{f_{{I_{\sec ,\infty }}}}\left( x \right)dx}  ,
\end{align}
where
\begin{align}\label{CDFsecanprf2large}
{I_{\sec ,\infty }} =& \mu {\sum _{i \in {\Phi _s}\backslash \big\{ 0 \big\}}}{\big| {{{\bf{h}}_{i,{s_0}}}\frac{{{\bf{h}}_{i,{s_i}}^\dag }}{{\big\| {{\bf{h}}_{i,{s_i}}^\dag } \big\|}}} \big|^2}{\big| {{X_{i,{s_i}}}} \big|^{ - \alpha }} + \big( {1 - \mu } \big)
\nonumber\\
&{\sum _{i \in {\Phi _s}\backslash \left\{ 0 \right\}}}{\left| {{X_{i,{s_i}}}} \right|^{ - \alpha }} + {\big({\frac{{{P_s}}}{{{P_p}}}}\big)^{ - 1}}{\sum _{j \in {\Phi _p}}}{\left| {{h_{j,{s_0}}}} \right|^2}{\left| {{X_{j,{s_0}}}} \right|^{ - \alpha }}.
\end{align}

In \eqref{CDFsecanprf1large}, ${f_{I_{\sec ,\infty }} }\left( x \right)$ is the inverse Laplace transform of ${\mathcal{L}_{I_{\sec ,\infty }}}\left( s \right)$, which can be expressed as  ${f_{I_{\sec ,\infty }} }\left( x \right)={\mathcal{L}_{I_{\sec ,\infty }}^{-1}}\left( s \right)$.
Due to the intractability of this inverse Laplace transform, some alternative ways have been proposed, such as using numerical inversion to evaluate ${\mathcal{L}_{I_{\sec ,\infty }}^{-1}}\left( s \right)$ \cite{Hollenbeck1998}, or  the log-normal approximations to approximate ${f_{I_{\sec ,\infty }} }\left( x \right)$.
However, in our case, there exists singularity at $\left| {{X_{i,{s_i}}}} \right|=\left| {{X_{j,{s_0}}}} \right|=0$, thus  the mean and variance of ${I_{\sec ,\infty }}$ derived from the moment generating function diverge \cite{Geraci2014,venkataraman2006shot}, which renders the derivation of the PDF of ${I_{\sec ,\infty }}$. Alternatively, we utilize the Gil-Pelaez theorem~\cite{wendel1961} to facilitate the derivation of the  asymptotic CDF of SIR at the typical SU in the following lemma.
\begin{lemma}
With BF$\&$AN at the SU transmitters,  the asymptotic CDF of SIR at the typical SU at large ${N_s}$ is given as
\begin{align}\label{CPT_gpt_OPlarge}
{F_{\gamma _{\tt SIR}^{s,\infty }}}\big( {\gamma _{th}^s} \big)& = 1 - {F_{{I_{\sec ,\infty }}}}\big( {\frac{{\mu {N_s}}}{{\gamma _{th}^s{r_{{s_0}}}^\alpha }}} \big)
\nonumber\\
&= \frac{1}{2} + \frac{1}{\pi }\int_0^\infty  {\frac{{{\rm{Im}}\big[ {{e^{ - \frac{{jw\mu {N_s}}}{{\gamma _{th}^s{r_{{s_0}}}^\alpha }}}}{\varphi ^*}\left( w \right)} \big]}}{w}} dw,
\end{align}
where ${F_{{I_{\sec ,\infty }}}}\left( x\right)$ is the CDF of ${{I_{\sec ,\infty }}}$, $j=\sqrt{\left(-1\right)}$, and $\varphi\left(w\right)$ is the conjugate of the characteristic function, which is given by
\begin{align}\label{CPT_gpt_OP_2large}
\varphi\big(w\big)=&{\exp \big( { - \pi \Xi \Gamma \big( {1 - \frac{2}{\alpha }} \big){\eta ^{ - \frac{2}{\alpha }}}{{\big( {jw} \big)}^{\frac{2}{\alpha }}}} \big)}.
\end{align}
\end{lemma}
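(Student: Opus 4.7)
The plan is to combine the threshold-inversion step already made explicit in the first line of \eqref{CPT_gpt_OPlarge} with the Gil-Pelaez inversion theorem. First I would note that, since the denominator in \eqref{SINRasysec} is positive almost surely, the event $\{\gamma_{\tt SIR}^{s,\infty}<\gamma_{th}^s\}$ is equivalent to $\{I_{\sec,\infty}>\mu N_s/(\gamma_{th}^s r_{s_0}^\alpha)\}$; this yields the first equality of \eqref{CPT_gpt_OPlarge} without further work.

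As the discussion preceding the lemma makes clear, $I_{\sec,\infty}$ has divergent moments (the path-loss kernel is non-integrable at the origin), so I cannot invert $\mathcal{L}_{I_{\sec,\infty}}$ to obtain a tractable PDF. Instead I would invoke the Gil-Pelaez formula
\begin{equation*}
F_I(x)=\tfrac{1}{2}-\tfrac{1}{\pi}\int_0^\infty \frac{\mathrm{Im}\bigl[e^{-jwx}\phi_I(w)\bigr]}{w}\,dw,
\end{equation*}
applied to $I_{\sec,\infty}$ at $x=\mu N_s/(\gamma_{th}^s r_{s_0}^\alpha)$, and subtract the result from $1$. Identifying the characteristic function $\phi_{I_{\sec,\infty}}(w)$ with $\varphi^{\ast}(w)$---so that $\varphi$ is the conjugate of the CF, matching the notation used in the lemma---then immediately reproduces the second line of \eqref{CPT_gpt_OPlarge}.

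It remains to compute $\varphi(w)=\mathcal{L}_{I_{\sec,\infty}}(jw)$. Because the three shot-noise sums in \eqref{CDFsecanprf2large} come from independent PPPs, the Laplace transform factorises. The PU component is a textbook Rayleigh-faded PGFL calculation and contributes $\pi\lambda_p\Gamma(1-2/\alpha)\Gamma(1+2/\alpha)(s/\eta)^{2/\alpha}$ to the exponent. For the SU component, the key observation is that the BF residual $\mu|\mathbf{h}_{i,s_0}\mathbf{h}_{i,s_i}^{\dag}/\|\mathbf{h}_{i,s_i}\|\,|^2$ and the AN residual $1-\mu$ share the \emph{same} point $X_{i,s_0}$ of $\Phi_s$, so they must enter the PGFL as a single joint mark at each transmitter. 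After polar conversion and the standard identity $\int_0^\infty(1-e^{-ar^{-\alpha}})r\,dr=\tfrac{1}{2}a^{2/\alpha}\Gamma(1-2/\alpha)$, taking an expectation over the Rayleigh-distributed BF gain $G\sim\mathrm{Exp}(1)$ reduces the SU contribution to $\pi\lambda_s\Gamma(1-2/\alpha)s^{2/\alpha}\int_0^\infty(\mu t+1-\mu)^{2/\alpha}e^{-t}dt$. Adding the two contributions, pulling out the common factor $\eta^{-2/\alpha}$, and recognising the bracket as $\Xi$ from \eqref{Xi_1} gives $\mathcal{L}_{I_{\sec,\infty}}(s)=\exp\!\bigl(-\pi\Xi\,\Gamma(1-2/\alpha)\,\eta^{-2/\alpha}s^{2/\alpha}\bigr)$; evaluating at $s=jw$ recovers \eqref{CPT_gpt_OP_2large}.

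The one non-routine step is the joint marking of the SU interference: naively treating the BF sum and the AN sum as independent shot noises would produce a product of two PGFL integrals and thereby double-count the SU density $\lambda_s$. Respecting the per-transmitter coupling collapses the two sums into a single expectation over $G$ of $(\mu G+1-\mu)^{2/\alpha}$, which is precisely the defining integral of $\Xi$ in \eqref{Xi_1}.
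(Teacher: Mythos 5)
Your proposal is correct and follows the same route the paper takes: the threshold inversion $\{\gamma_{\tt SIR}^{s,\infty}<\gamma_{th}^s\}=\{I_{\sec,\infty}>\mu N_s/(\gamma_{th}^s r_{s_0}^\alpha)\}$, Gil--Pelaez inversion in place of the intractable inverse Laplace transform, and a PGFL computation of $\mathcal{L}_{I_{\sec,\infty}}(jw)$ with the per-transmitter mark $\mu G+(1-\mu)$, $G\sim\mathrm{Exp}(1)$, whose fractional moment is exactly the integral appearing in $\Xi$. Your explicit treatment of the joint marking (rather than two independent shot noises) is precisely what the paper's Lemma~1 machinery encodes, so there is no gap.
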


Since we can not derive the closed form expression for the general form for the PDF of ${I_{\sec ,\infty }}$, we present the special case for the path loss component $\alpha=4$. In the following corollary, we derive the asymptotic CDF of SIR for the typical secondary user with $\alpha=4$.
\begin{cor}
With BF$\&$AN at the SU transmitters  and $\alpha=4$,  the asymptotic CDF of SIR at the typical SU is derived as
\begin{align}\label{CDFsecan2large}
&F_{\gamma _{\tt SIR}^{s,AN}}^\infty \big( {\gamma _{th}^s} \big) = \Phi \big( {\frac{{\pi {\Xi }}}{2}\sqrt {\frac{{\pi \gamma _{th}^s{r_{{s_0}}}^\alpha }}{{\mu {N_s}}}} } \big),
\end{align}
where
\begin{align}\label{phi}
\Phi \left( x \right) = \frac{1}{{\sqrt \pi  }}\int_0^{{x^2}} {\frac{{{e^{ - t}}}}{{\sqrt t }}} dt.
\end{align}
\end{cor}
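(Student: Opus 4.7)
The plan is to exploit the fact that, specialised to $\alpha=4$, the Laplace transform $\mathcal{L}_{I_{\sec,\infty}}$ implicit in the derivation of Lemma 5 collapses to the Laplace transform of a one-sided $1/2$-stable (Lévy) law, whose CDF admits a closed form in terms of the error-function-like object $\Phi$ in \eqref{phi}. This short-circuits the Gil-Pelaez integral representation used in Lemma 5 for general $\alpha$, turning the inverse-Fourier problem into a direct Laplace inversion.

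I would proceed in four short steps. First, substitute $\alpha=4$ into the exponent of $\varphi(w)$ in \eqref{CPT_gpt_OP_2large}, using $\Gamma(1-2/\alpha)=\Gamma(1/2)=\sqrt{\pi}$, so that the corresponding Laplace transform reduces to $\mathcal{L}_{I_{\sec,\infty}}(s)=\exp(-C\sqrt{s})$ for a positive constant $C$ depending on $\Xi$, $\eta$ and $\pi$ (indeed $\varphi(w)=\exp(-C\sqrt{w/2}\,(1+j))$ via $(jw)^{1/2}=\sqrt{w/2}(1+j)$, which identifies $C=\pi\Xi\sqrt{\pi/\eta}$). Second, recognise this as the Laplace transform of the Lévy distribution and invert it to $f_{I_{\sec,\infty}}(x)=\frac{C}{2\sqrt{\pi}}\,x^{-3/2}\,\exp(-C^2/(4x))$ for $x>0$. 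Third, compute the CDF by evaluating $\int_0^{a}f_{I_{\sec,\infty}}(x)\,dx$ with the substitution $t=C^2/(4x)$; the integrand reshapes into $\frac{1}{\sqrt{\pi}}\int_{C^2/(4a)}^{\infty}t^{-1/2}e^{-t}\,dt$, which by \eqref{phi} is exactly $1-\Phi\!\left(C/(2\sqrt{a})\right)$. Fourth, plug $a=\mu N_s/(\gamma_{th}^{s}\,r_{s_0}^{4})$ into the identity $F_{\gamma_{\tt SIR}^{s,\infty}}(\gamma_{th}^{s})=1-F_{I_{\sec,\infty}}(a)$ recorded in \eqref{CDFsecanprf1large} and collect $C$ with $\sqrt{a}$ to arrive at the claimed $\Phi(\cdot)$ expression.

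The main obstacle is the Laplace-inversion step: although $\mathcal{L}^{-1}\{e^{-C\sqrt{s}}\}$ is a textbook Lévy density, one has to fix the principal branch of $\sqrt{s}$ carefully and verify that the resulting function integrates to unity, so as to confirm that it is a bona fide PDF. Once that is in hand, everything else is mechanical substitution. A viable alternative, closer in spirit to the Gil-Pelaez line of Lemma 5, is to keep the representation \eqref{CPT_gpt_OPlarge}, substitute $(jw)^{1/2}=\sqrt{w/2}(1+j)$, reduce the integrand to $e^{-A\sqrt{w}}\sin(A\sqrt{w}-wB)/w$ for appropriate constants $A,B$, and collapse this oscillatory integral to a Fresnel-type form by differentiating under the integral sign with respect to $A$; this route is feasible but noticeably heavier than the direct Lévy inversion, which is why I would prefer the latter.
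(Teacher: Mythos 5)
Your proposal is correct and follows essentially the route the paper implies but does not spell out: specialising the Laplace transform of $I_{\sec ,\infty }$ to $\alpha=4$ (using $\Gamma(1/2)=\sqrt{\pi}$) gives $\exp(-C\sqrt{s})$, which is inverted as the L\'evy (one-sided $1/2$-stable) density $\frac{C}{2\sqrt{\pi}}x^{-3/2}e^{-C^2/(4x)}$, whose CDF is exactly $1-\Phi\left(C/(2\sqrt{a})\right)$ with $\Phi$ as in \eqref{phi}, and this is then substituted into $F_{\gamma _{\tt SIR}^{s,\infty }}\left( {\gamma _{th}^s} \right)=1-F_{I_{\sec ,\infty }}\left(\mu N_s/(\gamma_{th}^s r_{s_0}^{\alpha})\right)$ from \eqref{CDFsecanprf1large}. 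One caveat you should not gloss over: your own constant $C=\pi\Xi\sqrt{\pi/\eta}$ correctly retains the factor $\eta^{-1/2}=(P_s/P_p)^{-1/2}$ coming from \eqref{CPT_gpt_OP_2large}, so the final step of your computation actually produces $\Phi\big( {\frac{\pi\Xi}{2}\sqrt{\pi\gamma_{th}^s r_{s_0}^{\alpha}/(\eta\mu N_s)}} \big)$, whereas the stated \eqref{CDFsecan2large} has no $\eta$; either the corollary implicitly normalises $P_s=P_p$ or it carries a typo (the same factor is dropped in the companion corollary for the average secrecy rate), and your write-up should flag rather than silently "collect" this discrepancy.
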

Note that our derived asymptotic CDF of SIR at the typical SU for $\alpha=4$ is in exact closed-form.

\subsection{Average Secrecy Rate}

Based on the CDF of SIR  at the most detrimental eavesdroppers in \eqref{CDFeveAN_2large} and the CDF of SIR at the typical SU in \eqref{CPT_gpt_OPlarge},  we derive the general case of the asymptotic average secrecy rate using \eqref{y_31} in the following proposition.

\begin{ppro}
With BF$\&$AN at the SU transmitters, the asymptotic average secrecy rate at large ${N_s} $ is derived as
\begin{align}\label{secrecyrate_AN_asy_gene}
&\bar C_{se}^\infty = \nonumber\\& \frac{1}{{\ln 2}}\int_0^\infty  {\frac{1}{{1 + {x_2}}}\exp \big( { - \frac{{\pi {\lambda _e}{e^{\big( {1 - {\mu ^{ - 1}}} \big){x_2}}}}}{{\Xi \Gamma \big( {1 - \frac{2}{\alpha }} \big)}}{{\big( {\frac{{\mu P_s }}{{{x_2}{P_p}}}} \big)}^{\frac{2}{\alpha }}}} \big)} \Big[ {\frac{1}{2} - \frac{1}{\pi }} \Big.
\nonumber\\&
\Big.\int_0^\infty  {{\rm{Im}}[\frac{{{{(\exp ( - \pi \Xi \Gamma (1 - \frac{2}{\alpha }){{(\frac{{{P_s}}}{{{P_p}}})}^{ - \frac{2}{\alpha }}}{{(jw)}^{\frac{2}{\alpha }}}))^*}}}}{{{e^{\frac{{jw\mu {N_s}}}{{{x_2}{r_{{s_0}}}^\alpha }}}}}}]} {\rm{ }}\frac{1}{w}dw\Big]d{x_2}.
\end{align}
\end{ppro}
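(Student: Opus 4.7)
The plan is to derive \eqref{secrecyrate_AN_asy_gene} by direct substitution into the generic formula \eqref{y_31} for the average secrecy rate, specialized to the large-$N_s$ regime in which both SIRs have already been characterized earlier in this section. More precisely, since
\begin{align*}
\bar R_{se}^\infty = \frac{1}{\ln 2}\int_0^\infty \frac{F_{\gamma_{\tt SIR}^{e,\infty}}(x_2)}{1+x_2}\bigl(1 - F_{\gamma_{\tt SIR}^{s,\infty}}(x_2)\bigr)\,dx_2,
\end{align*}
the task reduces to plugging in the asymptotic CDFs of the eavesdropper SIR and of the legitimate SU SIR, both of which are already available.

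First, I would substitute the asymptotic CDF of $\gamma_{\tt SIR}^{e,\infty}$ given in \eqref{CDFeveAN_2large} directly into the factor $F_{\gamma_{\tt SIR}^{e,\infty}}(x_2)/(1+x_2)$; this produces the exponential prefactor visible inside the integral in \eqref{secrecyrate_AN_asy_gene}, with $\Xi$ defined as in \eqref{Xi_1}. Next, I would invoke Lemma 5 to replace $1 - F_{\gamma_{\tt SIR}^{s,\infty}}(x_2)$ by its Gil--Pelaez representation, yielding
\begin{align*}
1 - F_{\gamma_{\tt SIR}^{s,\infty}}(x_2) = \frac{1}{2} - \frac{1}{\pi}\int_0^\infty \frac{\operatorname{Im}\bigl[e^{-jw\mu N_s/(x_2 r_{s_0}^\alpha)}\varphi^*(w)\bigr]}{w}\,dw,
\end{align*}
and I would then insert the explicit expression for the conjugate characteristic function from \eqref{CPT_gpt_OP_2large}. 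Combining the two substitutions and collecting terms gives exactly the double integral appearing in \eqref{secrecyrate_AN_asy_gene}.

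There is no real analytical obstacle beyond bookkeeping, because the two ingredients (CDF of the eavesdropper SIR and Gil--Pelaez CDF of the SU SIR) have already been derived in \eqref{CDFeveAN_2large} and Lemma 5 respectively, and the identity \eqref{y_31} was justified earlier for arbitrary distributions of the two SIRs. The only mildly delicate point, which I would briefly flag, is the tacit interchange of the outer $x_2$-integral with the Gil--Pelaez $w$-integral: this is legitimate here because $\varphi(w)$ decays like $\exp(-c\,w^{2/\alpha})$ with $c>0$ (since $\alpha>2$), so the $w$-integral is absolutely convergent uniformly in $x_2$ on any compact subinterval of $(0,\infty)$, while the exponential factor coming from $F_{\gamma_{\tt SIR}^{e,\infty}}$ ensures integrability at $x_2\to 0$ and $1/(1+x_2)$ together with the same exponential factor (which tends to $1$) still gives convergence at $x_2\to\infty$ through the $\Phi$-type tail behaviour hinted at in Corollary~1. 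Once the interchange is noted, the proposition follows immediately by algebraic simplification.
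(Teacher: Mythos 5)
Your proposal is correct and follows exactly the route the paper intends: substitute the asymptotic eavesdropper CDF \eqref{CDFeveAN_2large} and the Gil--Pelaez representation of $1-F_{\gamma_{\tt SIR}^{s,\infty}}$ from Lemma~5 (with the characteristic function \eqref{CPT_gpt_OP_2large}) into the general average-secrecy-rate formula \eqref{y_31}. The paper gives no further argument beyond this substitution, so your additional remark on justifying the interchange of the $x_2$- and $w$-integrals is a small refinement rather than a deviation.
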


Having  \eqref{CDFeveAN_2large} and \eqref{CDFsecan2large}, we derive the asymptotic average secrecy rate for the special case of  $\alpha=4$ in the following corollary.

\begin{cor}
With BF$\&$AN at the SU transmitters and $\alpha=4$, the asymptotic average secrecy rate at large ${N_s} $ is derived as
\begin{align}\label{secrecyrate_AN_asy}
\bar C_{se}^\infty  = & \frac{1}{{\ln 2}}\int_0^\infty  {\frac{1}{{1 + {x_2}}}} \exp \big( { - \frac{{\pi {\lambda _e}{e^{\big( {1 - {\mu ^{ - 1}}} \big){x_2}}}}}{{\Xi \Gamma \big( {1 - \frac{2}{\alpha }} \big)}}{{\big( {\frac{\mu }{{{x_2}}}} \big)}^{\frac{2}{\alpha }}}} \big)
\nonumber\\& \big( {1 - \Phi \big( {\frac{{\pi \Xi }}{2}\sqrt {\frac{{\pi {x_2}}}{{\mu {N_s}{r_{{s_0}}}^{ - \alpha }}}} } \big)} \big)d{x_2}.
\end{align}
\end{cor}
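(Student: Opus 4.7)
The plan is to obtain Corollary 2 by a direct specialization of the generic formula derived for $\bar C_{se}^\infty$ in Proposition 3 to the case $\alpha=4$. Concretely, the starting point is the asymptotic counterpart of \eqref{y_31},
\begin{align*}
\bar C_{se}^\infty = \frac{1}{\ln 2}\int_0^\infty \frac{F_{\gamma_{\tt SIR}^{e,\infty}}(x_2)}{1+x_2}\bigl(1 - F_{\gamma_{\tt SIR}^{s,\infty}}(x_2)\bigr)\,dx_2,
\end{align*}
which is exactly the identity that led to \eqref{secrecyrate_AN_asy_gene}, but now with the $\alpha=4$ forms of the two asymptotic CDFs inserted. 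So the proof reduces to plugging in the two already-established building blocks and recognizing that no further manipulation is required.

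First, I would substitute the asymptotic CDF of the SIR at the most detrimental eavesdropper from \eqref{CDFeveAN_2large}. Evaluating this at $\gamma_{th}^{\{e\}}=x_2$ (and noting that, within the conventions of the corollary, the $P_s/P_p$ ratio is absorbed into the definition of $\Xi$ in \eqref{Xi_1}) directly reproduces the exponential factor
\begin{align*}
\exp\!\Bigl( - \tfrac{\pi\lambda_e e^{(1-\mu^{-1})x_2}}{\Xi\,\Gamma(1-2/\alpha)}\bigl(\tfrac{\mu}{x_2}\bigr)^{2/\alpha}\Bigr)
\end{align*}
in the integrand of \eqref{secrecyrate_AN_asy}. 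Second, for the $(1-F_{\gamma_{\tt SIR}^{s,\infty}})$ factor, I invoke Corollary 1, which gives the closed-form $F_{\gamma_{\tt SIR}^{s,\infty}}(x_2)=\Phi\!\bigl(\tfrac{\pi\Xi}{2}\sqrt{\pi x_2 r_{s_0}^{\alpha}/(\mu N_s)}\bigr)$ with $\Phi$ as in \eqref{phi}, valid precisely because $\alpha=4$ allows the Gil-Pelaez integral in Lemma 5 to collapse to a Gaussian-type error function. Substituting both factors and retaining the $1/(\ln 2)\cdot 1/(1+x_2)$ kernel from the outer integral yields \eqref{secrecyrate_AN_asy} verbatim.

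The main obstacle is therefore not in the present corollary but in the input it relies on, namely Corollary 1: without the closed-form expression for $F_{\gamma_{\tt SIR}^{s,\infty}}$ at $\alpha=4$, the integrand would still carry the Gil-Pelaez oscillatory integral in \eqref{CPT_gpt_OPlarge}, which is what the general expression \eqref{secrecyrate_AN_asy_gene} already looks like. Once Corollary 1 is in hand, the derivation here is a one-line substitution, and the only care needed is to confirm that (i) the $P_s/P_p$ dependence on the eavesdropper side is consistently folded into $\Xi$, and (ii) the factor $r_{s_0}^{\alpha}$ (equivalently $1/r_{s_0}^{-\alpha}$) appearing under the square root in the argument of $\Phi$ in \eqref{secrecyrate_AN_asy} is identified with $r_{s_0}^{\alpha}$ from the statement of Corollary 1.
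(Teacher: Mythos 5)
Your proposal matches the paper's own derivation: Corollary 2 is obtained precisely by substituting the asymptotic eavesdropper CDF \eqref{CDFeveAN_2large} and the closed-form $\alpha=4$ SU CDF from Corollary 1 (eq. \eqref{CDFsecan2large}) into the average secrecy rate identity \eqref{y_31}, with no further manipulation. Your added remarks on the $P_s/P_p$ normalization inside $\Xi$ and the $r_{s_0}^{\alpha}$ bookkeeping are sensible cautions but do not change the argument.
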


\subsection{Secrecy Outage Probability}

We then turn our attention to the asymptotic secrecy outage probability. We take the derivative of the asymptotic CDF of SIR at the most detrimental eavesdroppers in \eqref{CDFeveAN_2large}, and substitute it with  the asymptotic CDF of SIR at the typical SU in \eqref{CDFsecan2large} into \eqref{SOP_1}, to yield the general case of the asymptotic secrecy outage probability in the  following proposition.

\begin{ppro}
With BF$\&$AN at the SU transmitters, the asymptotic  secrecy outage probability  at large ${N_s} $ is derived as
\begin{align}\label{secrecyop_AN_asy_gene}
&P_{out,AN}^\infty \big( {{R_s}} \big) = \nonumber\\
& \hspace{0.3cm}\int_0^\infty  {\frac{{\pi {\lambda _e}{{(\mu {{{P_s}} \mathord{\left/
 {\vphantom {{{P_s}} {{P_p}}}} \right.
 \kern-\nulldelimiterspace} {{P_p}}})}^{\frac{2}{\alpha }}}}}{{\Xi \Gamma (1 - \frac{2}{\alpha }){x_2}^{\frac{2}{\alpha }}}}\exp ( - \frac{{\pi {\lambda _e}{e^{(1 - {\mu ^{ - 1}}){x_2}}}}}{{\Xi \Gamma (1 - \frac{2}{\alpha })}}{{(\frac{{\mu {P_s}}}{{{x_2}{P_p}}})}^{\frac{2}{\alpha }}})}
\nonumber\\& \hspace{0.3cm}
{e^{\big( {1 - {\mu ^{ - 1}}} \big){x_2}}}\Big( {\big( {1 - {\mu ^{ - 1}}} \big) - \big( { - \frac{2}{\alpha }} \big){x_2}^{ - 1}} \Big)\Big[ {\frac{1}{2} + \frac{1}{\pi }} \Big.
\nonumber\\& \hspace{0.3cm} \int_0^\infty  {{\rm{Im}}[\frac{{{{(\exp ( - \pi \Xi \Gamma (1 - \frac{2}{\alpha }){{\left( {{{{P_s}} \mathord{\left/
 {\vphantom {{{P_s}} {{P_p}}}} \right.
 \kern-\nulldelimiterspace} {{P_p}}}} \right)}^{ - \frac{2}{\alpha }}}{{(jw)}^{\frac{2}{\alpha }}}))^*}}}}{{{e^{\frac{{jw\mu {N_s}}}{{({2^{{R_s}}}(1{\rm{ + }}{x_2}) - 1){r_{{s_0}}}^\alpha }}}}}}]}
\nonumber\\& \hspace{0.3cm}
\left. {\frac{1}{w}dw} \right]d{x_2}.
\end{align}
\end{ppro}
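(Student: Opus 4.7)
The plan is to apply the definition of secrecy outage probability in \eqref{SOP_1} directly, with the two ingredients being (i) the asymptotic PDF of $\gamma_{\tt SIR}^{e,\infty}$, obtained by differentiating the asymptotic CDF in \eqref{CDFeveAN_2large}, and (ii) the asymptotic CDF of $\gamma_{\tt SIR}^{s,\infty}$, already available in Gil--Pelaez form in \eqref{CPT_gpt_OPlarge}. Since both large-$N_s$ distributions are given as \emph{functions} of the threshold, the construction of $P_{out,AN}^\infty(R_s)$ reduces to a single outer integration over $x_2$ on $(0,\infty)$ once the substitution $\gamma_{th}^{s}=2^{R_s}(1+x_2)-1$ is performed inside the CDF of $\gamma_{\tt SIR}^{s,\infty}$.

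First, I would rewrite \eqref{CDFeveAN_2large} compactly as $F_{\gamma_{\tt SIR}^{e,\infty}}(x_2)=\exp\!\bigl(-K\,e^{(1-\mu^{-1})x_2}x_2^{-2/\alpha}\bigr)$, where $K \defequal \frac{\pi\lambda_e}{\Xi\,\Gamma(1-2/\alpha)}\bigl(\frac{\mu P_s}{P_p}\bigr)^{2/\alpha}$ collects all constants with respect to $x_2$. Differentiating by the chain rule gives
\begin{align*}
f_{\gamma_{\tt SIR}^{e,\infty}}(x_2)
= K\,x_2^{-2/\alpha}\,e^{(1-\mu^{-1})x_2}\Bigl[(1-\mu^{-1})-\bigl(-\tfrac{2}{\alpha}\bigr)x_2^{-1}\Bigr]\,F_{\gamma_{\tt SIR}^{e,\infty}}(x_2),
\end{align*}
which matches the leading factor in the claim once the constants are unfolded. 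The only care needed here is to track the two contributions coming from differentiating the exponential prefactor $e^{(1-\mu^{-1})x_2}$ and the power $x_2^{-2/\alpha}$ inside the outer exponential; everything else is an algebraic rearrangement.

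Next, I would substitute this PDF together with the Gil--Pelaez representation
\begin{align*}
F_{\gamma_{\tt SIR}^{s,\infty}}\bigl(2^{R_s}(1+x_2)-1\bigr)
= \frac{1}{2}+\frac{1}{\pi}\int_0^\infty \frac{\mathrm{Im}\bigl[e^{-\frac{jw\mu N_s}{(2^{R_s}(1+x_2)-1)r_{s_0}^{\alpha}}}\varphi^{*}(w)\bigr]}{w}\,dw
\end{align*}
into \eqref{SOP_1}, using $\varphi(w)$ from \eqref{CPT_gpt_OP_2large} with $\eta^{-1}= P_p/P_s$. This delivers a double integral over $(x_2,w)\in(0,\infty)^2$, which, after relabelling the outer variable, coincides term-by-term with the displayed expression in \textbf{Proposition 3}. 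The transmit power $P_s$ throughout is identified with $P_s^{\max}$ from \eqref{Ps_AN} (or more properly its large-$N_s$ counterpart in \eqref{Psinf_AN_asy}), so no additional simplification is required.

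The main obstacle is bookkeeping rather than analysis: the bracket $[(1-\mu^{-1})-(-2/\alpha)x_2^{-1}]$ must carry the correct sign after differentiation, and the Gil--Pelaez inner integral must be kept under its imaginary-part operator so that the substitution $\gamma_{th}^{s}\mapsto 2^{R_s}(1+x_2)-1$ goes inside the complex exponential $e^{-jw\mu N_s/(\gamma_{th}^{s}r_{s_0}^{\alpha})}$ only, not into $\varphi^{*}(w)$. Once those two bookkeeping points are handled, the result follows without any further estimation; in particular, no interchange of limit and integral is needed because \textbf{Lemma 4} has already justified replacing the random quadratic forms by their deterministic limits at the SIR level before integration.
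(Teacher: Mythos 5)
Your proposal follows essentially the same route as the paper: differentiate the asymptotic eavesdropper CDF \eqref{CDFeveAN_2large}, keep the typical-SU CDF in its Gil--Pelaez form \eqref{CPT_gpt_OPlarge} evaluated at $2^{R_s}(1+x_2)-1$, and substitute both into \eqref{SOP_1}; this is exactly the one-line derivation the paper gives (the paper's cross-reference to \eqref{CDFsecan2large} notwithstanding, the general proposition indeed uses \eqref{CPT_gpt_OPlarge}, as you do). One caution on the step you yourself flag as the main bookkeeping risk: writing $F_{\gamma_{\tt SIR}^{e,\infty}}(x_2)=\exp\bigl(-K e^{c x_2}x_2^{-2/\alpha}\bigr)$ with $c=1-\mu^{-1}$, the chain rule gives the bracket $\bigl(\tfrac{2}{\alpha}x_2^{-1}-c\bigr)$ (which is nonnegative since $c\le 0$), whereas the bracket you display, $\bigl(c-(-\tfrac{2}{\alpha})x_2^{-1}\bigr)=c+\tfrac{2}{\alpha}x_2^{-1}$, has the sign of the $c$ term flipped and can go negative; this matches the paper's printed formula, so it appears to be a sign slip inherited from the statement rather than a flaw in your method, but your claim that the chain rule ``gives'' that bracket is not literally correct.
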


\begin{figure}[t!]
    \begin{center}
        \includegraphics[width=3.2in,height=2.6in]{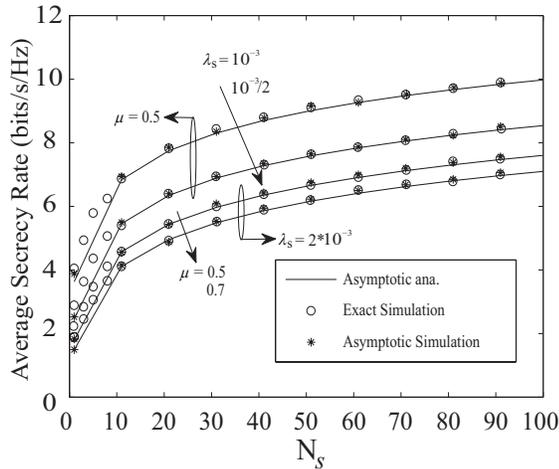}
        \caption{Asymptotic average secrecy rate versus $N_s$. Parameters: $\rho^{\left\{ p \right\}}_{ out}=0.1$,  $\lambda_p = 10^{-4}\; m^{-2}$,  $\lambda_e = 10^{-4}\; m^{-2}$, $r_p=6\;m$, $\mu=0.7$, $\alpha =3$, $r_s=3\;m$, ${P_p}=15$~dBm, and ${{\gamma _{th}^{\left\{ p \right\}} }}=6$~dBm. }
        \label{fig:7}
    \end{center}
\end{figure}

\bigskip
Based on \eqref{CDFsecan2large}, we derive the secrecy outage probability  for $\alpha=4$ as a special case in the following corollary.
\begin{cor}
With BF$\&$AN at the SU transmitters and $\alpha=4$, the asymptotic  secrecy outage probability  at large ${N_s} $ is derived as
\begin{align}\label{y_33}
&P_{out,AN}^\infty \big( {{R_s}} \big) = \nonumber\\
& \hspace{0.5cm} \int_0^\infty  {\frac{{\pi {\lambda _e}{\mu ^{\frac{2}{\alpha }}}}}{{\Xi \Gamma \big( {1 - \frac{2}{\alpha }} \big)}}\exp \big( { - \frac{{\pi {\lambda _e}{e^{\big( {1 - {\mu ^{ - 1}}} \big){x_2}}}}}{{\Xi \Gamma \big( {1 - \frac{2}{\alpha }} \big)}}{{\big( {\frac{\mu }{{{x_2}}}} \big)}^{\frac{2}{\alpha }}}} \big)}
\nonumber\\
& \hspace{0.5cm}{x_2}^{ - \frac{2}{\alpha }}{e^{\big( {1 - {\mu ^{ - 1}}} \big){x_2}}}\Big( {\big( {1 - {\mu ^{ - 1}}} \big) - \big( { - \frac{2}{\alpha }} \big){x_2}^{ - 1}} \Big)
\nonumber\\
& \hspace{0.5cm}\Phi \big( {\frac{{\pi \Xi }}{2}\sqrt {\frac{{\pi \big( {{2^{{R_s}}}\big( {1{\rm{ + }}{x_2}} \big) - 1} \big)}}{{\mu {N_s}{r_{{s_0}}}^{ - \alpha }}}} } \big)d{x_2}.
\end{align}
\end{cor}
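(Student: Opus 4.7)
The plan is to specialize Proposition 3 to $\alpha=4$ by substituting the closed-form asymptotic CDF of Corollary 1 in place of the Gil-Pelaez integral representation. Recalling the secrecy outage probability definition in equation \eqref{SOP_1},
\begin{align}
P_{out,AN}^{\infty}(R_s) = \int_0^{\infty} f_{\gamma_{\tt SIR}^{e,\infty}}(x_2)\, F_{\gamma_{\tt SIR}^{s,\infty}}\bigl(2^{R_s}(1+x_2)-1\bigr)\,dx_2,
\end{align}
the proof requires two ingredients: the PDF of the most-detrimental eavesdropper's SIR in the large-$N_s$ limit, and the asymptotic CDF of the typical SU's SIR specialized to $\alpha=4$.

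First, I would obtain the PDF of $\gamma_{\tt SIR}^{e,\infty}$ by differentiating the asymptotic CDF in \eqref{CDFeveAN_2large} with respect to $x_2$. Writing that CDF as $\exp(-g(x_2))$ where $g(x_2) = (\pi\lambda_e/(\Xi\Gamma(1-2/\alpha)))\, e^{(1-\mu^{-1})x_2}\,(\mu P_s/(x_2 P_p))^{2/\alpha}$, the chain rule gives $f_{\gamma_{\tt SIR}^{e,\infty}}(x_2) = g'(x_2)\exp(-g(x_2))$, and the factor $g'(x_2)/g(x_2)$ splits via the product rule applied to $e^{(1-\mu^{-1})x_2}\, x_2^{-2/\alpha}$ into the bracket $\bigl((1-\mu^{-1}) - (-2/\alpha)\,x_2^{-1}\bigr)$ visible in equation \eqref{y_33}. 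This reproduces the prefactor and exponential terms of the integrand exactly as in Proposition 3 with a routine bookkeeping step.

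Next, I would replace the $\alpha=4$ asymptotic CDF of $\gamma_{\tt SIR}^{s,\infty}$ from Corollary 1 evaluated at the secrecy outage threshold, i.e., substitute $\gamma_{th}^{s} \to 2^{R_s}(1+x_2)-1$ into \eqref{CDFsecan2large}. This yields
\begin{align}
F_{\gamma_{\tt SIR}^{s,AN}}^{\infty}\bigl(2^{R_s}(1+x_2)-1\bigr) = \Phi\Bigl(\tfrac{\pi\Xi}{2}\sqrt{\tfrac{\pi\bigl(2^{R_s}(1+x_2)-1\bigr) r_{s_0}^{\alpha}}{\mu N_s}}\,\Bigr),
\end{align}
which is precisely the $\Phi(\cdot)$ factor appearing in \eqref{y_33}. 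Assembling the differentiated eavesdropper CDF and the Corollary 1 expression inside the SOP integral produces the stated result.

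The main obstacle is carefully tracking the chain-rule derivative of the composition in \eqref{CDFeveAN_2large}, particularly ensuring that the two contributions from differentiating $e^{(1-\mu^{-1})x_2}$ and $x_2^{-2/\alpha}$ combine into the bracket $\bigl((1-\mu^{-1}) - (-2/\alpha)\,x_2^{-1}\bigr)$ with the correct signs; beyond this, the derivation is essentially a clean substitution because Corollary 1 already replaced the awkward Gil-Pelaez integral of Lemma 6 with the closed-form $\Phi(\cdot)$ for $\alpha=4$, so no further analytic manipulation of the SU's interference distribution is required.
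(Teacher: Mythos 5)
Your proposal matches the paper's derivation exactly: Corollary 3 is obtained by differentiating the asymptotic eavesdropper CDF \eqref{CDFeveAN_2large} to obtain its PDF and pairing it inside the outage integral \eqref{SOP_1} with the closed-form $\alpha=4$ CDF of Corollary 1 evaluated at $2^{R_s}(1+x_2)-1$. The only point deserving care is the one you flag yourself, namely the sign bookkeeping in the bracket: writing the CDF as $e^{-g(x_2)}$ with $g(x_2)\propto e^{(1-\mu^{-1})x_2}x_2^{-2/\alpha}$ gives $f=-g'e^{-g}=g(x_2)\big(\tfrac{2}{\alpha}x_2^{-1}-(1-\mu^{-1})\big)e^{-g}$, so you should verify that this is consistent with the bracket as printed rather than taking it on faith; apart from that check the route is identical to the paper's.
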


\bigskip
\subsection{Numerical examples for the asymptotic secrecy performance of  BF$\&$AN}
\smallskip
Fig.~\ref{fig:7} plots the asymptotic average secrecy rate of large scale spectrum sharing networks with BF$\&$AN for various power allocation factor $\mu$ and $\lambda_s$. We assume $P_s= P_s^{max}$. The analytical results of asymptotic secrecy rate plotted using \eqref{secrecyrate_AN_asy_gene}  are in precise agreement with  the simulation points of asymptotic secrecy rate.
It is also shown that  the asymptotic average secrecy rate converges to the exact average secrecy rate at large $N_s$.

We observe that the average secrecy rate  increases with increasing $N_s$. This can be indicated by \eqref{SINRasysec} that the received SIR at the typical SU proportionally increases  with $\mu N_s$. For the same $\mu$, to achieve the same average secrecy rate, there exists  antenna gaps between the curves with different density of SU. This antenna gap quantifies how  many additional antennas needed to be employed  at the SU transmitter  to achieve the same average secrecy rate when the network  double its density of SU.

Fig.~\ref{fig:8} plots the asymptotic secrecy outage probability versus  $N_s$. The analytical results of asymptotic outage probability plotted using \eqref{secrecyop_AN_asy_gene}  are in precise agreement with  the simulation points of asymptotic outage probability. Furthermore, the asymptotic secrecy outage probability converges to the exact  secrecy outage probability at large $N_s$.
We see that the secrecy outage probability decreases  with increasing $N_s$, due to the increase of the array gains at the SU receiver. We also see that the secrecy outage probability decreases with increasing $\mu$, which reflects that for the scenario with relatively less dense eavesdroppers,
  more power should be allocated to transmit useful information to the SU receiver for the information delivery enhancement.

  \begin{figure}[t!]
\vspace{-0.18in}
    \begin{center}
        \includegraphics[width=3.2in,height=2.6in]{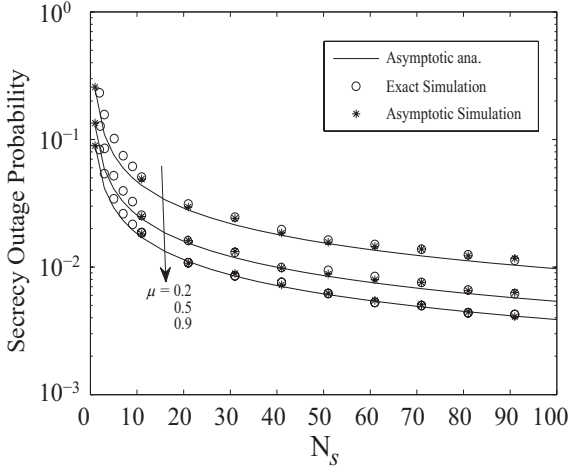}
        \vspace{0.01in}\caption{ Secrecy outage probability for large $N_s$. Parameters: $\rho^{\left\{ p \right\}}_{ out}=0.1$, $\lambda_p = 10^{-4}\; m^{-2}$, $\lambda_s = 10^{-3}\; m^{-2}$, $\lambda_e = 10^{-4}\; m^{-2}$, $r_p=6\;m$, $\alpha =3$, $r_s=3\;m$, $R_s=1$, ${P_p}=15$~dBm, and  ${{\gamma _{th}^{\left\{ p \right\}} }}=6$~dBm.}
        \label{fig:8}
    \end{center}
\end{figure}

\section{Conclusion}
\label{conclusion}
In this paper, we considered  secure communication in  large scale  spectrum sharing networks in the presence of multiple non-colluding eavesdroppers.  We employed  beamforming and artificial noise generation (BF$\&$AN) at the SU transmitters to achieve secure transmission against  malicious eavesdroppers. { We obtained an exact  expression for the average secrecy rate, through which we observed the average secrecy rate boundary.} We also derived an exact expression for the  secrecy outage probability.  {Interestingly,  our results show that to achieve the optimal average secrecy rate,  more power should be allocated to AN with 
 increasing the density ratio between the eavesdroppers and the SUs; whereas for extremely low density ratio, all of the power should be allocated to information signal without injecting AN.}
Moreover, we derived the asymptotic average secrecy rate and the asymptotic secrecy outage probability
as the number of antennas at the SU transmitters grows large to showcase the large gain brought to the secrecy performance.

\appendices
\numberwithin{equation}{section}

\section{ A Proof of Lemma 1}\label{lemma1}

Consider a HPPP ${\Phi _s}$ with density $\lambda_s$, the aggregate interference from the SU  transmitters  is given by
\begin{align}
 {I_{s,z}} = \sum\nolimits_{i \in {\Phi _s}} {{W_{{s_i},z}}{{\left| {{X_{i,z}}} \right|}^{ - \alpha }}}.
 \end{align}
  The Laplace transform  of ${I_{s,z}}$ is
\begin{align}\label{lap_w1}
{\mathcal{L}_{{I_{s,z}}}}\left( s \right) =& \mathbbm{E}\Big( {\mathop \prod \limits_{i \in {\Phi _s}} {\mathbbm{E}_{{W_{{s_i},z}}}}\left( {\exp \left( { - s{W_{{s_i},z}}{{\left| {{X_{i,z}}} \right|}^{ - \alpha }}} \right)} \right)} \Big).
\end{align}
Applying  the Generating functional of HPPP in \cite{stoyanstochastic} and the polar-coordinate system, we have
\begin{align}\label{lap_w2}
 {\mathcal{L}_{{I_{s,z}}}}\left( s \right)
=& 
\exp \Big( { - {\lambda _s}\pi \mathbbm{E}\big[ {W_{{s_i},z}^{\frac{2}{\alpha }}} \big]\Gamma \big( {1 - \frac{2}{\alpha }} \big){s^{\frac{2}{\alpha }}}} \Big).
\end{align}

Then we turn our attention to derive  the expectation of  ${{W_{{s_i},z}}}$.
According to \cite{ngo2012uplink} and \cite{xi2013},  ${{{\Big| {{{\bf{h}}_{0,z}}\frac{{{\bf{h}}_{i,{s_i}}^\dag }}{{\left\| {{\bf{h}}_{{i},{s_i}}^\dag } \right\|}}} \Big|^2}}} \mathop  \sim Exp\left( {1} \right)$, and ${\left\| {{{\bf{h}}_{i,z}}{{\bf{G}}_{i,{s_i}}}} \right\|^2} \sim Gamma\left( {{N_s} - 1,1} \right)$.
Thus, we have the PDF distribution of   ${W_{{s_i},z}} = \sigma _s^2{\Big| {{{\bf{h}}_{i,z}}\frac{{{\bf{h}}_{i,{s_i}}^\dag }}{{\left\| {{\bf{h}}_{i,{s_i}}^\dag } \right\|}}} \Big|^2} + \sigma _n^2{\left\| {{{\bf{h}}_{i,z}}{{\bf{G}}_{i,{s_i}}}} \right\|^2}$  as
\begin{align}\label{PDF_Wz}
&{f_{{W_{{s_i},z}}}}\left( x \right)  =
  \left\{
\begin{array}{ll}
{\big( {1 - \frac{{{P_A}}}{{\left( {{N_s} - 1} \right){P_I}}}} \big)^{1 - {N_s}}}{\big( {{P_I}{e^{\frac{x}{{{P_I}}}}}} \big)^{ - 1}}\Big[ {1 - } \Big.\\  \hspace{0.6cm} \Big. {\sum\limits_{k = 0}^{{N_s} - 2} {{{\big( {\frac{{{N_s} - 1}}{{{P_A}}} - \frac{1}{{{P_I}}}} \big)}^k}\frac{{{x^k}}}{{k!}}{e^{ - \big( {\frac{{{N_s} - 1}}{{{P_A}}} - \frac{1}{{{P_I}}}} \big)x}}} } \Big]\\
 \hspace{5.04cm}\mbox{$\mu  \ne \frac{1}{{{N_s}}}$},\\
\frac{{{x^{{N_s} - 1}}{e^{ - \frac{x}{{{P_I}}}}}}}{{{P_I}^{{N_s} - 1}\left( {{N_s} - 1} \right)!}}
\hspace{3.0cm}\mbox{$\mu  = \frac{1}{{{N_s}}}$}.
\end{array}
\right.
\end{align}

  Taking the expectation of  ${{W_{{s_i},z}}}$ by using
\begin{align}
   \mathbbm{E}\big[ {W_{{s_i},z}^{\frac{2}{\alpha }}} \big] = \int_0^\infty  {x^{\frac{2}{\alpha }}{f_{{W_{{s_i},z}}}}\left( x \right)dx},
\end{align}   
    and substituting the derived expression of  $\mathbbm{E}\big[ {W_{{s_i},z}^{\frac{2}{\alpha }}} \big]$ into \eqref{lap_w2}, we  obtain \eqref{L_PDF_Wz}.

  \section{ A Proof of Theorem 1}\label{theorem1}

  According to the SIR of the typical PU receiver in \eqref{PU_SIR_AN},  we define  the sum interference at the typical PU receiver as 
{\begin{align}  
  {I_{Pr i, AN}} = {I_{p,{p_0}}} + P_p^{ - 1}{I_{s,{p_0}}},
  \end{align}}
   thus   the CDF of ${\gamma _{\tt SIR}^p}$  is expressed as
\begin{align}\label{CDFpri_1}
F_{\gamma _{\tt SIR}^p}^{\left\{ p \right\}}\big( {{\gamma _{th}^{\left\{ p \right\}} }} \big) = &{\mathbbm{E}_{{\Phi _p}}}\left\{ {{\mathbbm{E}_{{\Phi _s}}}\left\{ {\Pr \left\{ {\left. {{{\left| {{h_{{p_0}}}} \right|}^2} \le \gamma _{th}^{\left\{ p \right\}}{I_{Pri,AN}}{r_p}^\alpha } \right|} \right.} \right.} \right.
\nonumber \\ &\Big. {\Big. {\Big. {{\Phi _s},{\Phi _p}} \Big\}} \Big\}} \Big\}
  = 1 - {\mathcal{L}_{{I_{{Pr i, AN}}}}}\big( {\gamma _{th}^{\left\{ p \right\}}{r_p}^\alpha } \big)
\end{align}

By utilizing  similar approach in Appendix A of  \cite{zaidi2014} and based on  \textbf{Lemma 1}, we derive the outage probability at the typical PU receiver as
\begin{align}\label{CDFpriAN_2}
&P_{out}^{pri,AN}\big( {\gamma _{th}^{\left\{ p \right\}}} \big)  = \nonumber \\
&\left\{
\begin{array}{ll}
 1 - \exp \Big( { - \pi \big( {{\lambda _p}\Gamma \left( {1 + \frac{2}{\alpha }} \right) + {\lambda _s}{{\big( {\frac{{{P_s}}}{{{P_p}}}}  \big)}^{\frac{2}{\alpha }}}{\Upsilon _1}} \big)\delta} \Big)\\
 \hspace{6.7cm}\mbox{$\mu  \ne \frac{1}{{{N_s}}}$},\\
1 - \exp \Big( { - \pi \big( {{\lambda _p}\Gamma \left( {1 + \frac{2}{\alpha }} \right) + {\lambda _s}{{\big( {\mu {\frac{{{P_s}}}{{{P_p}}}} } \big)}^{\frac{2}{\alpha }}}\frac{{\Gamma \left( {{N_s} + \frac{2}{\alpha }} \right)}}{{\Gamma \left( {{N_s}} \right)}}} \big)\delta } \Big)\\
\hspace{6.7cm}\mbox{$\mu  = \frac{1}{{{N_s}}}$}.
\end{array}
\right.
\end{align}
where $\delta  = \Gamma \left( {1 - \frac{2}{\alpha }} \right){\big( {\gamma _{th}^{\left\{ p \right\}}} \big)^{\frac{2}{\alpha }}}{r_p}^2$.  By inversing \eqref{CDFpriAN_2}, we can derive the maximum permissive transmit power at the SU transmitters as \eqref{Ps_AN}.

\section{ A Proof of Lemma 2}\label{app_gsc_CDF}

The PDF and CDF of ${{\left\| {{\bold{h}_{0, {s_0}}}} \right\|}^2}$ are given by
\begin{align}\label{CDF1}
{f_{{{\left\| {{{\bf{h}}_{0, {s_0}}}} \right\|}^2}}}\left( x \right) = \frac{{{x^{{N_s} - 1}}{e^{ - x}}}}{{\left( {{{\rm{N}}_s} - 1} \right)!}},
\end{align}
and
\begin{align}\label{CDF2}
{F_{{{\left\| {{{\bf{h}}_{0, {s_0}}}} \right\|}^2}}}\left( x \right) = 1 - {e^{ - x}}\big( {\sum\limits_{m = 0}^{{N_s} - 1} {\frac{{{x^m}}}{{m!}}} } \big),
\end{align}
respectively.

Let us  define 
${I_{Sec,AN}} = {P_p}{I_{p,{s_0}}} + {I_{s,{s_0}}}$. 

 Based on the SIR  in \eqref{SU_SIR_AN},  the CDF of $\gamma _{\tt SIR}^{{s,AN}}$ can be represented as
\begin{align}\label{CDFsec_1}
&F_{\gamma _{\tt SIR}^{s,AN}}^{\left\{ s \right\}}\big( {\gamma _{th}^{\left\{ s \right\}}} \big) 
\nonumber \\
 &  =  1 - \sum\limits_{m = 0}^{{N_s} - 1} {{\mathbbm{E}_{{\Phi _p}}}} \Big\{ {{\mathbbm{E}_{{\Phi _s}}}\Big\{ {\int_0^\infty  {{e^{ - \tau \gamma _{th}^{\left\{ s \right\}}r_s^\alpha \sigma _s^{ - 2}}}{{\big( {\tau \gamma _{th}^{\left\{ s \right\}}r_s^\alpha \sigma _s^{ - 2}} \big)}^m}} } \big.} \Big.
   \nonumber \\& \hspace{0.4cm}\big. {\Big. {d\Pr \big( {{I_{Sec,AN}} \le \tau } \big)} \Big\}} \Big\}\frac{1}{{m!}}
\nonumber \\
&\mathop  = \limits^{(a)} 1 - {\mathbbm{E}_{{\Phi _p}}}\Big\{ {{\mathbbm{E}_{{\Phi _s}}}\Big\{ {\int_0^\infty  {{e^{ - \tau \gamma _{th}^{\left\{ s \right\}}r_s^\alpha \sigma _s^{ - 2}}}d\Pr \big( {{I_{Sec,AN}} \le \tau } \big)} } \Big\}} \Big\}
\nonumber \\
& \hspace{0.2cm}- \sum\limits_{m = 1}^{{N_s} - 1} {\frac{{{{\left( {r_s^\alpha } \right)}^m}}}{{m!{{\left( { - 1} \right)}^m}}}} {\mathbbm{E}_{{\Phi _p}}}\Big\{ {{\mathbbm{E}_{{\Phi _s}}}\Big\{ {\int_0^\infty  {{{\left. {\frac{{{d^m}\big( {{e^{ - \tau \gamma _{th}^{\left\{ s \right\}}x\sigma _s^{ - 2}}}} \big)}}{{d{x^m}}}} \right|}_{x = r_s^\alpha }}} } \Big.} \Big.
\nonumber \\
&\hspace{0.2cm}\Big. {\Big. {d\Pr \left( {{I_{Sec,AN}} \le \tau } \right)} \Big\}} \Big\},
\end{align}
where (a) follows from the fact that
\begin{align}
 {\left. {\frac{{{d^m}({e^{ - \tau \gamma _{th}^{\left\{ s \right\}}x\sigma _s^{ - 2}}})}}{{d{x^m}}}} \right|_{x = r_s^\alpha }} = {( - \tau \gamma _{th}^{\{ s\} }\sigma _s^{ - 2})^m}{e^{ - \tau \gamma _{th}^{\{ s\} }r_s^\alpha \sigma _s^{ - 2}}}. 
\end{align}

After some manipulations, we have
\begin{align}\label{CDFsec_2}
& F_{\gamma _{\tt SIR}^{s,AN}}^{\left\{ s \right\}}\big( {\gamma _{th}^{\left\{ s \right\}}} \big)
=1 - {{\cal L}_{{I_{Sec,AN}}}}\big( {\gamma _{th}^{\left\{ s \right\}}r_s^\alpha \sigma _s^{ - 2}} \big)
\nonumber \\
&\hspace{1cm} - \sum\limits_{m = 1}^{{N_s} - 1} {\frac{{{{\big( {r_s^\alpha } \big)}^m}}}{{m!{{\big( { - 1} \big)}^m}}}} {\big. {\frac{{{d^m}\big\{ {{{\cal L}_{{I_{Sec,AN}}}}\big( {\gamma _{th}^{\left\{ s \right\}}x\sigma _s^{ - 2}} \big)} \big\}}}{{d{x^m}}}} \big|_{x = r_s^\alpha }}.
\end{align}

We then need to derive the Laplace transform of ${I_{Sec,AN}}$.  Utilizing  \cite[eq. (4)]{zaidi2014} and \textbf{Lemma 2}, we obtain
\begin{align}\label{CDFsec_3}
&{{\cal L}_{{I_{Sec,AN}}}}\big( {\gamma _{th}^{\left\{ s \right\}}r_s^\alpha \sigma _s^{ - 2}} \big) = \exp \big( { - {\Lambda _l}{\big( {\gamma _{th}^{\left\{ s \right\}}} \big)}^{\frac{2}{\alpha }}r_s^2} \big),
\end{align}
where $\Lambda _l$ is given in \eqref{lambda}.

Now, we apply the Fa$\grave{a}$
 di Bruno's formula to solve the derivative of $m$th order as follows:
%
\begin{align}\label{deri1}
&{\big. {\frac{{{d^m}\big[ {\exp \big( { - {\Lambda _l}{{\big( {\gamma _{th}^{\left\{ s \right\}}} \big)}^{\frac{2}{\alpha }}}{x^{\frac{2}{\alpha }}}} \big)} \big]}}{{d{x^m}}}} \big|_{x = r_s^\alpha }}  = \exp \big( { - {\Lambda _l}{{\big( {\gamma _{th}^{\left\{ s \right\}}} \big)}^{\frac{2}{\alpha }}}r_s^2} \big)\nonumber \\
& \hspace{1cm}\sum m!\prod\limits_{j = 1}^m {\frac{{{{(( - {\Lambda _l}{{(\gamma _{th}^{\left\{ s \right\}})}^{\frac{2}{\alpha }}})\prod\limits_{k = 0}^{j - 1} {(\frac{2}{\alpha } - k)} {{({r_s})}^{2 - j\alpha }})}^{{m_j}}}}}{{{m_j}!j{!^{{m_j}}}}}}.
\end{align}

By substituting \eqref{deri1} into \eqref{CDFsec_2}, we get the closed-form expression for the CDF of SIR at the typical secondary user as \eqref{CDFsecan2}.

\section{ A Proof of Lemma 3}\label{app_gsc_CDF}

Let us define ${I_{Eve,AN}} = {P_p}{I_{p,{e_k}}} + {I_{s,{e_k}}} + \sigma _n^2{I_{{s_0},{e_k},an}}$. 

The CDF of ${\gamma _{\tt SIR}^e}$ can be written as
\begin{align}\label{CDFe_2}
\hspace{-0.3cm} F_{\gamma _{\tt SIR}^{e,AN}}^{\left\{ e \right\}}\big( {\gamma _{th}^{\{ e\} }} \big)
& = {\mathbbm{E}_{{\Phi _e}}}\Big\{ {{\mathbbm{E}_{{\Phi _p}}}\Big\{ {{\mathbbm{E}_{{\Phi _s}}}\Big\{ {{\prod _{{e_k} \in {\Phi _e}}}\Pr \big\{ {{{\big| {{{\bf{h}}_{{0},e}}\frac{{{\bf{h}}_{{0},{s_i}}^\dag }}{{\big\| {{\bf{h}}_{{0},{s_i}}^\dag } \big\|}}} \big|}^2} \le } \big.} \Big.} \Big.} \Big.
\nonumber \\
&\hspace{-1cm}\Big. {\Big. {\Big. {\Big. {\big. {\sigma _s^{ - 2}{I_{Eve,AN}}\gamma _{th}^{\{ e\} }{{\big| {{X_{e_k}}} \big|}^\alpha }} \big|{\Phi _s},{\Phi _p},{\Phi _e}} \Big\}} \Big\}} \Big\}} \Big\}.
\end{align}

According to \cite{ngo2012uplink},  $ {{{\bf{h}}_{{0},{e_k}}}\frac{{{\bf{h}}_{{0},{s_i}}^\dag }}{{\left\| {{\bf{h}}_{{0},{s_i}}^\dag } \right\|}}} $  is a zero-mean complex Gaussian variable, which is independent of ${{\bf{h}}_{{0},{s_i}}^\dag }$, and ${{{\big| {{{\bf{h}}_{{0},{e_k}}}\frac{{{\bf{h}}_{{0},{s_i}}^\dag }}{{\left\| {{\bf{h}}_{{0},{s_i}}^\dag } \right\|}}} \big|^2}}}$ follows the exponential distribution with unit mean. Thus, the CDF of ${\gamma _{\tt SIR}^e}$ can be represented as
\begin{align}\label{CDFe_22}
\hspace{-0.2cm} F_{\gamma _{\tt SIR}^{e,AN}}^{\left\{ e \right\}}\big( {\gamma _{th}^{\{ e\} }} \big)
& =  {\mathbbm{E}_{{\Phi _e}}}\Big\{ {{\prod _{{e_k} \in {\Phi _e}}}\Big( {1 - {\mathbbm{E}_{{\Phi _p}}}\Big\{ {{\mathbbm{E}_{{\Phi _s}}}\Big\{ {\int_0^\infty  {{e^{ - \tau \sigma _s^{ - 2}\gamma _{th}^{\{ e\} }}}} } \Big.} \Big.} \Big.} \Big.
\nonumber \\
&\hspace{1cm}\Big. {\Big. {\Big. {\Big. {{{\big| {{X_{e_k}}} \big|}^\alpha }d\Pr \big( {{I_{Eve,AN}} \le \tau } \big)} \Big\}} \Big\}} \Big)} \Big\}.
\end{align}

According to the proof of Lemma 3.1 in \cite{baccelli2006}, we express \eqref{CDFe_22} as
\begin{align}\label{CDFe_3}
F_{\gamma _{\tt SIR}^{e,AN}}^{\left\{ e \right\}}\big( {{\gamma _{th}}} \big) =& {\mathbbm{E}_{{\Phi _e}}}\Big\{ {{\prod _{{e_k} \in {\Phi _e}}}\Big( {1 - {{\cal L}_{{I_{Eve,AN}}}}\big( {\sigma _s^{ - 2}\gamma _{th}^{\{ e\} }{{\left| {{X_{e_k}}} \right|}^\alpha }} \big)} \Big)} \Big\}.
\end{align}

By using the  Generating functional of HPPP ${\Phi _e}$ \cite{stoyanstochastic}, we solve \eqref{CDFe_3} as
\begin{align}\label{CDFe_4}
&F_{\gamma _{\tt SIR}^{e,AN}}^{\left\{ e \right\}}\left( {{\gamma _{th}}} \right) \nonumber =  \exp \Big[ { - {\lambda _e}{\smallint _{{R^2}}}{{\cal L}_{{I_{Eve,AN}}}}\big( {\sigma _s^{ - 2}\gamma _{th}^{\{ e\} }{{\left| {{X_{e_k}}} \right|}^\alpha }} \big)de} \Big]
 \nonumber \\
 &= \exp \Big[ { - 2\pi {\lambda _e}\int_0^\infty  {{{\cal L}_{{I_{Eve,AN}}}}\big( {\sigma _s^{ - 2}\gamma _{th}^{\{ e\} }{{\left| {{X_{e_k}}} \right|}^\alpha }} \big)\left| {{X_{e_k}}} \right|d\left| {{X_{e_k}}} \right|} } \Big].
\end{align}

Now we utilize  \cite[eq. (4)]{zaidi2014} and \textbf{Lemma 2},   
we derive the Laplace transform of $I_{Eve,AN}$
as
\begin{align}\label{CDFe_54}
&{\mathcal{L}_{{I_{Eve,AN}}}}\left( s \right)  = \nonumber\\& \hspace{0.1cm}
\left\{
\begin{array}{ll}
 \exp \Big( { - \pi \big( {{\lambda _p}\Gamma \big( {1 + \frac{2}{\alpha }} \big){\eta ^{ - \frac{2}{\alpha }}}{\mu ^{ - \frac{2}{\alpha }}} + {\lambda _s}\frac{{\Gamma \big( {{N_s} + \frac{2}{\alpha }} \big)}}{{\Gamma \big( {{N_s}} \big)}}} \big)} \Big.\\
\hspace{0.4cm} \Big. {\Gamma \big( {1 - \frac{2}{\alpha }} \big){{\big( {\gamma _{th}^{\{ e\} }} \big)}^{\frac{2}{\alpha }}}{{\big| {{X_{e_k}}} \big|}^2}} \Big){\big( {\frac{{1 - \mu }}{{\big( {{N_s} - 1} \big)\mu }}\gamma _{th}^{\{ e\} } + 1} \big)^{ - \left( {{N_s} - 1} \right)}}\\
 \hspace{6.5cm}\mbox{$\mu  = \frac{1}{{{N_s}}}$},\\
\exp \Big( { - \pi } \Big.\big( {{\lambda _p}\Gamma \left( {1 + \frac{2}{\alpha }} \right){\eta ^{ - \frac{2}{\alpha }}} + {\lambda _s}{\Upsilon _1}} \big)\Gamma \left( {1 - \frac{2}{\alpha }} \right)\\
\hspace{0.4cm}\Big. {{{\big( {\gamma _{th}^{\{ e\} }} \big)}^{\frac{2}{\alpha }}}{\mu ^{ - \frac{2}{\alpha }}}{{\left| {{X_{e_k}}} \right|}^2}} \Big){\big( {\frac{{1 - \mu }}{{\left( {{N_s} - 1} \right)\mu }}\gamma _{th}^{\{ e\} } + 1} \big)^{ - \left( {{N_s} - 1} \right)}}\\
 \hspace{6.5cm}\mbox{$\mu \ne \frac{1}{{{N_s}}}$}.
\end{array}
\right.
\end{align}

By substituting \eqref{CDFe_54} into \eqref{CDFe_4}, we obtain
\begin{align}\label{CDFe_6}
F_{\gamma _{\tt SIR}^e}^{\left\{ e \right\}}\big( {\gamma _{th}^{\{ e\} }} \big) = &\exp \Big[ { - 2\pi {\lambda _e}{{\Big( {\frac{{1 - \mu }}{{\big( {{N_s} - 1} \big)\mu }}\gamma _{th}^{\{ e\} } + 1} \Big)}^{ - \left( {{N_s} - 1} \right)}}} \Big.
\nonumber \\
 &
\hspace{-1.5cm} \Big. {\int_0^\infty  {\exp \big( { - {\Lambda _l}{{\big( {\gamma _{th}^{\{ e\} }} \big)}^{\frac{2}{\alpha }}}{{\left| {{X_{e_k}}} \right|}^2}} \big)\left| {{X_{e_k}}} \right|d\left| {{X_{e_k}}} \right|} } \Big],
\end{align}
where $\Lambda _l$ is given in \eqref{lambda}.

By applying  \cite[Eq. 3.326.2.10]{gradshteyn00},
we derive the CDF of ${\gamma _{\tt SIR}^{e,AN}}$ as \eqref{CDFeveAN_2}.

\bibliographystyle{IEEEtran}
\bibliography{mybib}

\begin{thebibliography}{10}
\providecommand{\url}[1]{#1}
\csname url@samestyle\endcsname
\providecommand{\newblock}{\relax}
\providecommand{\bibinfo}[2]{#2}
\providecommand{\BIBentrySTDinterwordspacing}{\spaceskip=0pt\relax}
\providecommand{\BIBentryALTinterwordstretchfactor}{4}
\providecommand{\BIBentryALTinterwordspacing}{\spaceskip=\fontdimen2\font plus
\BIBentryALTinterwordstretchfactor\fontdimen3\font minus
  \fontdimen4\font\relax}
\providecommand{\BIBforeignlanguage}[2]{{%
\expandafter\ifx\csname l@#1\endcsname\relax
\typeout{** WARNING: IEEEtran.bst: No hyphenation pattern has been}%
\typeout{** loaded for the language `#1'. Using the pattern for}%
\typeout{** the default language instead.}%
\else
\language=\csname l@#1\endcsname
\fi
#2}}
\providecommand{\BIBdecl}{\relax}
\BIBdecl

\bibitem{yan2015ICC}
Y.~Deng, L.~Wang, S.~A.~R. Zaidi, J.~Yuan, and M.~Elkashlan, ``On the security
  of large scale spectrum sharing networks,'' in \emph{Proc. IEEE Int. Conf.
  Commun. (ICC)}, Jun. 2015, pp. 4877--4882.

\bibitem{goldsmith09}
A.~Goldsmith, S.~Jafar, I.~Maric, and S.~Srinivasa, ``Breaking spectrum
  gridlock with cognitive radios: An information theoretic perspective,''
  \emph{Proc. IEEE}, vol.~97, no.~5, pp. 894--914, May. 2009.

\bibitem{yan2015}
Y.~Deng, L.~Wang, M.~Elkashlan, K.~J. Kim, and T.~Q. Duong, ``Generalized
  selection combining for cognitive relay networks over nakagami- m fading,''
  \emph{{IEEE} Trans. Signal Process.}, vol.~63, no.~8, pp. 1993--2006, Apr.
  2015.

\bibitem{frag2013survey}
A.~G. Fragkiadakis, E.~Z. Tragos, and I.~G. Askoxylakis, ``A survey on security
  threats and detection techniques in cognitive radio networks,'' \emph{{IEEE}
  J. Sel. Areas Commun.}, vol.~15, no.~1, pp. 428--445, Jan. 2013.

\bibitem{mukherjee2014principle}
A.~Mukherjee, S.~A.~A. Fakoorian, J.~Huang, and A.~L. Swindlehurst,
  ``Principles of physical layer security in multiuser wireless networks: A
  survey,'' \emph{IEEE Commun. Surveys Tuts.}, vol.~16, no.~3, pp. 1550--1573,
  Mar. 2014.

\bibitem{wang2011anti}
B.~Wang, Y.~Wu, K.~J.~R. Liu, and T.~C. Clancy, ``An anti-jamming stochastic
  game for cognitive radio networks,'' \emph{{IEEE} J. Sel. Areas Commun.},
  vol.~29, no.~4, pp. 877--889, Apr. 2011.

\bibitem{zhou2011throughput}
X.~Zhou, R.~K. Ganti, J.~G. Andrews, and A.~Hjorungnes, ``On the throughput
  cost of physical layer security in decentralized wireless networks,''
  \emph{{IEEE} Trans. Wireless Commun.}, vol.~10, no.~8, pp. 2764--2775, Aug.
  2011.

\bibitem{wyner1975wire}
A.~D. Wyner, ``The wire-tap channel,'' \emph{Bell System Technical Journal},
  vol.~54, no.~8, pp. 1355--1387, Oct. 1975.

\bibitem{yan2016sensor}
Y.~Deng, L.~Wang, M.~Elkashlan, A.~Nallanathan, and R.~K. Mallik, ``Physical
  layer security in three-tier wireless sensor networks: {A} stochastic
  geometry approach,'' \emph{IEEE Trans. Inf. Forensics Security}, to appear
  2016.

\bibitem{tong2014transmission}
T.-X. Zheng, H.-M. Wang, and Q.~Yin, ``On transmission secrecy outage of a
  multi-antenna system with randomly located eavesdroppers,'' \emph{{IEEE}
  Commun. Lett.}, vol.~18, no.~8, pp. 1299--1302, Aug. 2014.

\bibitem{zheng2015multi}
T.~Zheng, H.~Wang, J.~Yuan, D.~Towsley, and M.~Lee, ``Multi-antenna
  transmission with artificial noise against randomly distributed
  eavesdroppers,'' \emph{IEEE Trans. Commun.}, to appear 2015.

\bibitem{chao2015uncoordinated}
C.~Wang, H.-M. Wang, X.~gen Xia, and C.~Liu, ``Uncoordinated jammer selection
  for securing {SIMOME} wiretap channels: A stochastic geometry approach,''
  \emph{{IEEE} Trans. Wireless Commun.}, vol.~14, no.~5, pp. 2596--2612, May.
  2015.

\bibitem{he2013physical}
H.~Wang, X.~Zhou, and M.~C. Reed, ``Physical layer security in cellular
  networks: A stochastic geometry approach,'' \emph{{IEEE} Trans. Wireless
  Commun.}, vol.~12, no.~6, pp. 2776--2787, Jun. 2013.

\bibitem{Geraci2014}
G.~Geraci, H.~S. Dhillon, J.~G. Andrews, J.~Yuan, and I.~B. Collings,
  ``Physical layer security in downlink multi-antenna cellular networks,''
  \emph{{IEEE} Trans. Commun.}, vol.~62, no.~6, pp. 2006--2021, Jun. 2014.

\bibitem{pinto2012secure1}
P.~C. Pinto, J.~Barros, and M.~Z. Win, ``Secure communication in stochastic
  wireless networks—part {I}: connectivity,'' \emph{IEEE Trans. Inf.
  Forensics Security}, vol.~7, no.~1, pp. 125--138, Feb. 2012.

\bibitem{elsawy2013stochastic}
H.~ElSawy, E.~Hossain, and M.~Haenggi, ``Stochastic geometry for modeling,
  analysis, and design of multi-tier and cognitive cellular wireless networks:
  A survey,'' \emph{IEEE Commun. Surveys and Tutorials}, vol.~15, no.~3, pp.
  996--1019, Jul. 2013.

\bibitem{huiming2015}
H.-M. Wang and X.-G. Xia, ``Enhancing wireless secrecy via cooperation: signal
  design and optimization,'' \emph{{IEEE} Commun. Mag.}, vol.~53, no.~12, pp.
  47--53, Dec. 2015.

\bibitem{shafiee2007achievable}
S.~Shafiee and S.~Ulukus, ``Achievable rates in {G}aussian {MISO} channels with
  secrecy constraints,'' in \emph{Proc. IEEE ISIT}, Nice, France, Jun. 2007,
  pp. 2466--2470.

\bibitem{goel2008guaranteeing}
S.~Goel and R.~Negi, ``Guaranteeing secrecy using artificial noise,''
  \emph{{IEEE} Trans. Wireless Commun.}, vol.~7, no.~6, pp. 2180--2189, Jun.
  2008.

\bibitem{xiang2010secure}
X.~Zhou and M.~R. McKay, ``Secure transmission with artificial noise over
  fading channels: Achievable rate and optimal power allocation,'' \emph{{IEEE}
  Trans. Veh. Technol.}, vol.~59, no.~8, pp. 3831--3842, Oct. 2010.

\bibitem{xi2013}
X.~Zhang, X.~Zhou, and M.~R. McKay, ``Enhancing secrecy with multi-antenna
  transmission in wireless ad hoc networks,'' \emph{{IEEE} Trans. Inf.
  Forensics Security}, vol.~8, no.~11, pp. 1802--1814, Nov. 2013.

\bibitem{swundke2009fixed}
A.~L. Swindlehurst, ``Fixed {SINR} solutions for the {MIMO} wiretap channel,''
  in \emph{Proc. IEEE Int. Conf. Acoust. Speech Signal Process.}, Taipei,
  China, Apr. 2009, pp. 2437--2440.

\bibitem{stanojev2013improve}
I.~Stanojev and A.~Yener, ``Improving secrecy rate via spectrum leasing for
  friendly jamming,'' \emph{{IEEE} Trans. Wireless Commun.}, vol.~12, no.~1,
  pp. 134--145, Jan. 2013.

\bibitem{ning2013cooperative}
N.~Zhang, N.~Lu, N.~Cheng, J.~W. Mark, and X.~S. Shen, ``Cooperative spectrum
  access towards secure information transfer for {CRN}s,'' \emph{{IEEE} J. Sel.
  Areas Commun.}, vol.~31, no.~11, pp. 2453--2464, Nov. 2013.

\bibitem{yuan2014secrecy}
Y.~He, J.~Evans, and S.~Dey, ``Secrecy rate maximization for cooperative
  overlay cognitive radio networks with artificial noise,'' in \emph{Proc. IEEE
  Int. Conf. Commun. (ICC)}, Jun. 2014, pp. 1663--1668.

\bibitem{yulong2013pysical}
Y.~Zou, X.~Wang, and W.~Shen, ``Physical-layer security with multiuser
  scheduling in cognitive radio networks,'' \emph{{IEEE} Trans. Commun.},
  vol.~61, no.~12, pp. 5103--5113, Dec. 2013.

\bibitem{yu2014secrecy}
Y.~Zou, X.~Li, and Y.~chang Liang, ``Secrecy outage and diversity analysis of
  cognitive radio systems,'' \emph{{IEEE} J. Sel. Areas Commun.}, vol.~32,
  no.~11, pp. 2222--2236, Nov. 2014.

\bibitem{yiyang2013secure}
Y.~Pei, Y.-C. Liang, L.~Zhang, K.~C. Teh, and K.~H. Li, ``Secure communication
  over {MISO} cognitive radio channels,'' \emph{{IEEE} Trans. Wireless
  Commun.}, vol.~9, no.~4, pp. 1494--1502, Apr. 2010.

\bibitem{yiyang11secure}
Y.~Pei, Y.-C. Liang, K.~C. Teh, and K.~H. Li, ``Secure communication in
  multiantenna cognitive radio networks with imperfect channel state
  information,'' \emph{{IEEE} Trans. Signal Process.}, vol.~59, no.~4, pp.
  1683--1693, Apr. 2011.

\bibitem{chao2014secrecy}
C.~Wang and H.-M. Wang, ``On the secrecy throughput maximization for {MISO}
  cognitive radio network in slow fading channels,'' \emph{IEEE Trans. Inf.
  Forensics Security}, vol.~9, no.~11, pp. 1814--1827, Nov. 2014.

\bibitem{anand2008cognitive}
S.~Anand and R.~Chandramouli, ``On the secrecy capacity of fading cognitive
  wireless networks,'' in \emph{Proc. IEEE Int. Conf. Cogn. Radio Oriented
  Wireless Netw. Commun.}, May 2008, pp. 1--5.

\bibitem{shu2013physical}
Z.~Shu, Y.~Qian, and S.~Ci, ``On physical layer security for cognitive radio
  networks,'' \emph{IEEE Netw.}, vol.~27, no.~3, pp. 28--33, Jun. 2013.

\bibitem{weber2010overview}
S.~Weber, J.~G. Andrews, and N.~Jindal, ``An overview of the transmission
  capacity of wireless networks,'' \emph{{IEEE} Trans. Commun.}, vol.~58,
  no.~12, pp. 3593--3604, Dec. 2010.

\bibitem{zaidi2014}
S.~Zaidi, D.~McLernon, and M.~Ghogho, ``Breaking the area spectral efficiency
  wall in cognitive underlay networks,'' \emph{{IEEE} J. Sel. Areas Commun.},
  vol.~32, no.~11, pp. 1--17, Feb. 2014.

\bibitem{baccelli2006}
F.~Baccelli, B.~Blaszczyszyn, and P.~Muhlethaler, ``An aloha protocol for
  multihop mobile wireless networks,'' \emph{{IEEE} Trans. Inf. Theory},
  vol.~52, no.~2, pp. 421--436, Feb. 2006.

\bibitem{stoyanstochastic}
D.~Stoyan, W.~Kendall, and J.~Mecke, ``Stochastic geometry and its
  applications,'' \emph{Wiley New York}, vol.~2, 1987.

\bibitem{LunDong}
L.~Dong, Z.~Han, A.~P. Petropulu, and H.~V. Poor, ``Improving wireless physical
  layer security via cooperating relays,'' \emph{{IEEE} Trans. Signal
  Process.}, vol.~58, no.~3, pp. 1875--1888, Mar. 2010.

\bibitem{yan2014GSC}
Y.~Deng, M.~Elkashlan, P.~L. Yeoh, N.~Yang, and R.~K. Mallik, ``Cognitive
  {MIMO} relay networks with generalized selection combining,'' \emph{{IEEE}
  Trans. Wireless Commun.}, vol.~13, no.~9, pp. 4911--4922, Sep. 2014.

\bibitem{yan2015GSC}
Y.~Deng, M.~Elkashlan, N.~Yang, P.~L. Yeoh, and R.~K. Mallik, ``Impact of
  primary network on secondary network with generalized selection combining,''
  \emph{{IEEE} Trans. Veh. Technol.}, vol.~64, no.~7, pp. 3280--3285, Jul.
  2015.

\bibitem{Xiangyun2010}
X.~Zhou and M.~R. McKay, ``Secure transmission with artificial noise over
  fading channels: Achievable rate and optimal power allocation,'' \emph{{IEEE}
  Trans. Veh. Technol.}, vol.~59, no.~8, pp. 3831--3842, Oct. 2010.

\bibitem{lifeng2014physical}
L.~Wang, N.~Yang, M.~Elkashlan, P.~L. Yeoh, and J.~Yuan, ``Physical layer
  security of maximal ratio combining in two-wave with diffuse power fading
  channels,'' \emph{IEEE Trans. Inf. Forensics Security}, vol.~9, no.~2, pp.
  247--258, Feb. 2014.

\bibitem{hong2013enhancing}
Y.-W.~P. Hong, P.-C. Lan, and C.-C.~J. Kuo, ``Enhancing physical-layer secrecy
  in multiantenna wireless systems: An overview of signal processing
  approaches,'' \emph{IEEE Signal Processing Mag.}, vol.~30, no.~5, pp. 29--40,
  Sep. 2013.

\bibitem{haenggi2009stochastic}
M.~Haenggi, J.~G. Andrews, F.~Baccelli, O.~Dousse, and M.~Franceschetti,
  ``Stochastic geometry and random graphs for the analysis and design of
  wireless networks,'' \emph{{IEEE} J. Sel. Areas Commun.}, vol.~27, no.~7, pp.
  1029--1046, Sep. 2009.

\bibitem{Hollenbeck1998}
K. J. Hollenbeck, Invlap.m: A Matlab Function for Numerical Inversion of
  Laplace Transforms by the de Hoog Algorithm 1998. [Online].
  Available:http://www.mathworks.com/matlabcentral/fileexchange/
  32824-numerical-inversion-of-Laplace-transforms-in-matlab/content/ INVLAP.m.

\bibitem{venkataraman2006shot}
J.~Venkataraman, M.~Haenggi, and O.~Collins, ``Shot noise models for outage and
  throughput analyses in wireless ad hoc networks,'' in \emph{Proc. 44th Annu.
  Allerton Conf. Communication, Control, and Computing, Monticello, IL}, Sep.
  2006, pp. 1--7.

\bibitem{wendel1961}
J.~G. Wendel, ``The non-absolute convergence of {G}il-{P}elaez' inversion
  integral,'' \emph{The Annals of Mathematical Statistics}, vol.~32, no.~1, pp.
  338--339, Mar. 1961.

\bibitem{ngo2012uplink}
H.~Q. Ngo, M.~Matthaiou, T.~Q. Duong, and E.~G. Larsson, ``Uplink performance
  analysis of multicell {MU-MIMO} systems with {ZF} receivers,'' \emph{{IEEE}
  Trans. Veh. Technol.}, vol.~62, no.~9, pp. 4471--4482, Nov. 2013.

\bibitem{gradshteyn00}
I.~S. Gradshteyn and I.~M. Ryzhik, \emph{Table of Integrals, Series and
  Products}, 6th~ed.\hskip 1em plus 0.5em minus 0.4em\relax New York, NY, USA:
  Academic Press, 2000.

\end{thebibliography}

\balance
\end{document}